\PassOptionsToPackage{capitalize}{cleveref} 
\documentclass[a4paper,USenglish,cleveref]{lipics-v2021}

\hideLIPIcs  

\graphicspath{{figures/}}

\usepackage{cite}
\usepackage{xspace}
\usepackage{graphics}
\usepackage[ruled,linesnumbered,noend]{algorithm2e}
\usepackage{pgfplots}
\usepackage{dsfont}
\usepackage{braket}

\newcommand{\R}{\mathbb{R}}
\newcommand{\C}{\mathcal{C}}

\newcommand{\eps}{\varepsilon}

\newcommand{\fvd}{\mathrm{FVD}}
\newcommand{\freg}{\mathrm{freg}}
\newcommand{\gm}{\mathrm{GM}} 

\newcommand{\ThreeFaceVertex}[1]{
	\begin{tikzpicture}[scale=1]
		\def \r{#1};
		\def \R{4}
		\coordinate (p0) at (0,0);
		\coordinate (p1) at (-30:\R cm);
		\coordinate (p2) at (90:\R cm);
		\coordinate (p3) at (210:\R cm);
		\coordinate (p4) at (-30:\r*\R cm);
		\coordinate (p5) at (90:\r*\R cm);
		\coordinate (p6) at (210:\r*\R cm);
		\fill[color = blue!50] (p1) -- (p4) -- (p5) -- (p2) -- (p1 |- p2);
		\fill[color = green!50] (p2) -- (p5) -- (p6) -- (p3) -- (p3 |- p2);
		\fill[color = violet!90] (p3) -- (p6) -- (p4) -- (p1);
		\fill[color = red] (p4) -- (p5) -- (p6);
		\draw[very thick, color=black] (p1) -- (p4) ;
		\draw[very thick, color=black] (p2) -- (p5) ;
		\draw[very thick, color=black] (p3) -- (p6) ;
		\draw[very thick, color=black] (p4) -- (p5) ;
		\draw[very thick, color=black] (p5) -- (p6) ;
		\draw[very thick, color=black] (p6) -- (p4) ;
\end{tikzpicture}}
\newcommand{\ThreeFaceEdge}[1]{
	\begin{tikzpicture}[scale=1]
		\def \r{#1};
		\def \R{3}
		
		\coordinate (p0) at (0,0);
		\coordinate (bl) at (-\R,-\R);
		\coordinate (br) at (\R,-\R);
		\coordinate (tl) at (-\R,\R);
		\coordinate (tr) at (\R,\R);
		\coordinate (p1) at (0,-\R);
		\coordinate (p2) at (0,-\R*\r);
		\coordinate (p3) at (0,\R*\r);
		\coordinate (p4) at (0,\R);
		\fill[color = green!50] (bl) -- (p1) -- (p2) .. controls (-\R*\r,0) .. (p3) -- (p4) -- (tl);
		\fill[color = blue!50] (br) -- (p1) -- (p2) .. controls (\R*\r,0) .. (p3) -- (p4) -- (tr);
		\fill[color = red] (p2) .. controls (-\R*\r,0) .. (p3) .. controls ( \R*\r,0) .. (p2);
		\draw[very thick, color=black] (p1) -- (p2) ;
		\draw[very thick, color=black] (p2) .. controls (-\R*\r,0) .. (p3) ;
		\draw[very thick, color=black] (p2) .. controls ( \R*\r,0) .. (p3) ;
		\draw[very thick, color=black] (p3) -- (p4) ;
\end{tikzpicture}}
\newcommand{\ThreeFaceTwoFace}[1]{
		\begin{tikzpicture}[scale=1]
			\def \r{#1/100}
			\def \R{3}
			\coordinate (p0) at (0,0);
			\coordinate (bl) at (-\R,-\R);
			\coordinate (br) at (\R,-\R);
			\coordinate (tl) at (-\R,\R);
			\coordinate (tr) at (\R,\R);
			\fill[color = green!50] (bl) -- (br) -- (tr) -- (tl) -- (bl);
			\draw[draw = black, very thick, fill=red] (p0) circle (\R*\r);
		\end{tikzpicture}
}

\newcommand{\TwoFaceVertex}[1]{\Flip{#1}{4}}
\newcommand{\TwoFaceEdge}[1]{\Flip{#1}{3}}

\newcommand{\Flip}[2]{
	\begin{tikzpicture}[scale=1]
		\def \r{#1/100}
		\def \R{3}
		\def \ColorOne{green!50}
		\def \ColorTwo{red}
		\ifthenelse{#2 > 2}{\def \ColorThree{blue!50}}{\def \ColorThree{\ColorOne}}
		\ifthenelse{#2 > 3}{\def \ColorFour{violet!90}}{\def \ColorFour{\ColorTwo}}
		\coordinate (p0) at (0,0);
		\coordinate (bl) at (-\R,-\R);
		\coordinate (br) at (\R,-\R);
		\coordinate (tl) at (-\R,\R);
		\coordinate (tr) at (\R,\R);
		\ifthenelse{
			#1 < 0
		}{
			\coordinate (p1) at (\R*\r,0);
			\coordinate (p2) at (-\R*\r,0);
			\fill[color = \ColorOne] (bl) -- (p1) -- (tl);
			\fill[color = \ColorThree] (tr) -- (p2) -- (br);
			\ifthenelse{#2 = 4}{
				\fill[color = \ColorTwo] (tl) -- (p1) -- (p2) -- (tr);
				\fill[color = \ColorFour] (br) -- (p2) -- (p1) -- (bl);
				\draw[very thick, color=black] (p1) -- (p2) ;
			}{
				\fill[color = \ColorTwo] (tl) -- (p1) -- (bl) -- (br) -- (p2) -- (tr) -- (tl);
			}
			\draw[very thick, color=black] (bl) -- (p1) -- (tl) ;
			\draw[very thick, color=black] (br) -- (p2) -- (tr) ;
		}{
			\coordinate (p1) at (0,-\R*\r);
			\coordinate (p2) at (0,\R*\r);
			\fill[color = \ColorTwo] (tl) -- (p2) -- (tr);
			\fill[color = \ColorFour] (br) -- (p1) -- (bl);
			\ifthenelse{#2 > 2}{
				\fill[color = \ColorOne] (bl) -- (p1) -- (p2) -- (tl);
				\fill[color = \ColorThree] (tr) -- (p2) -- (p1) -- (br);
				\draw[very thick, color=black] (p1) -- (p2) ;
			}{
				\fill[color = \ColorOne] (tl) -- (bl) -- (p1) -- (br) -- (tr) -- (p2) -- (tl);
			}
			\draw[very thick, color=black] (bl) -- (p1) -- (br) ;
			\draw[very thick, color=black] (tl) -- (p2) -- (tr) ;
		}
\end{tikzpicture}}

\usepackage{mdframed}				

\newcommand{\deleted}[1]{}

\title{A Collapse Process for Farthest Voronoi Diagrams of Lines in Three Dimensions} 

\author{Evanthia Papadopoulou}
{Faculty of Informatics, Università della Svizzera italiana (USI), Lugano, Switzerland}
{evanthia.papadopoulou@usi.ch}
{https://orcid.org/0000-0003-0144-7384}{}

\author{Martin Suderland}
{Faculty of Informatics, Università della Svizzera italiana (USI), Lugano, Switzerland}
{martinsuderland.research@gmail.com}
{https://orcid.org/0000-0002-6604-6381}{}

\author{Zeyu Wang}
{Faculty of Informatics, Università della Svizzera italiana (USI), Lugano, Switzerland}
{zeyu.wang@usi.ch}
{https://orcid.org/0009-0004-4207-198X}
{} 

\authorrunning{E. Papadopoulou, M. Suderland, and Z. Wang} 

\Copyright{Evanthia Papadopoulou, Martin Suderland, and Zeyu Wang} 

\ccsdesc[500]{Theory of computation~Computational geometry} 

\keywords{Farthest Voronoi diagram, lines, three dimensions, trisector, structural properties, local extrema, collapse process, collapse algorithm} 

\funding{This research was supported by
	the Swiss National Science Foundation (SNF), project
        200021E$\_$201356. Martin Suderland was also supported by SNF project
        P500PT$\_$206736/1.} 

\nolinenumbers 

\EventEditors{Lin Chen and Nicole Megow}
\EventNoEds{2}
\EventLongTitle{37th International Symposium on Algorithms and Computation (ISAAC 2026)}
\EventShortTitle{ISAAC 2026}
\EventAcronym{ISAAC}
\EventYear{2026}
\EventDate{December 6--9, 2026}
\EventLocation{Hangzhou, China}
\EventLogo{}
\SeriesVolume{399}
\ArticleNo{59}

\begin{document}

\maketitle

\begin{abstract}
We study a \emph{collapse process} to construct the farthest Voronoi diagram of lines in three dimensions, given a spherical map of the diagram’s unbounded features. The collapse process sweeps through the diagram in order of decreasing distance from the farthest lines. It follows the \emph{shrinking map}, a cell complex on a topological sphere that encodes the locus of points with a fixed farthest distance. We show that, in three dimensions, the collapse process has exactly four non-terminal local event types that change the structure of the shrinking map: \emph{deletion, swap, local minimum}, and \emph{local maximum} events; plus one terminal event. This list is complete. 

We give intrinsic three-dimensional geometric descriptions of the four
non-terminal events. First, we classify the two vertex-related events, deletion
and swap events, by the spherical convex hull of the four tangent points from a vertex
to its four defining lines. Then, we analyze the events related to the local
extrema of the 
distance function along the trisector of three lines. We
show that the 
distance function along a trisector has at most $4$ local
maxima and $8$ local minima, and that both bounds are tight. The extrema can be
found via a polynomial of degree $12$. As a byproduct, this gives a direct
method for finding the smallest sphere tangent to three given lines.
At each local extremum, the tangent sphere touches the three lines at points lying on a great circle.

The collapse process and the completeness of the event list also apply, under similar general position assumptions, to the farthest Voronoi diagram of convex sites under strictly convex distance functions.
\end{abstract}

\section{Introduction}

Voronoi diagrams are versatile 
space partitioning structures
that encode proximity information 
among a set $S$ of simple geometric objects in a space, called sites.
The \emph{nearest} (respectively, \emph{farthest}) Voronoi diagram
of $S$ partitions the underlying space into regions 
that have the same closest (resp., farthest) site.
In this paper, we focus on the  Euclidean farthest Voronoi diagram of $n$ lines
in 3D, and describe a general paradigm
for its construction,  after the 
unbounded features of the diagram are known.

In the Euclidean plane, Voronoi diagrams have been intensively studied 
and they are 
generally well-understood~\cite{Aurenhammer2013}.
Early attention was given to the Euclidean Voronoi diagram of points
in $d$-dimensional space,
where the bisectors are hyperplanes, and the diagram has been thoroughly studied.
For fixed dimension $d$, a tight bound on the complexity of the Euclidean point Voronoi diagram is $O(n^{\lceil \frac{d}{2} \rceil})$ and 
it can be computed in $O(n \log n + n^{\lceil \frac{d}{2} \rceil})$ time~\cite{Edelsbrunner1986,Chazelle1991,Klee1980}.
These results also hold for certain polyhedral
norms~\cite{Boissonnat1998,Icking2001}, and for the farthest Voronoi diagram~\cite{Seidel1987}.

For sites more general than points, however, or for general metrics in $\R^d$ where the bisectors
are typically curved, Voronoi diagrams have been far less understood. 
Interpreting a Voronoi diagram in $\R^d$ as an \emph{arrangement} of distance functions of the given sites in $\R^{d+1}$~\cite{Edelsbrunner1986} is a common concept.
This implies that the nearest and farthest Voronoi diagrams have $O(n^{d+\epsilon})$ complexity,
for any~$\eps > 0$~\cite{Sharir1994}, assuming that the distance functions are simple enough; 
then they can be constructed in $O(n^{3+\epsilon})$
time~\cite{Agarwal1994} for $d=3$.
Taking this route, 
an algorithm for computing the Voronoi diagram of lines in  $\R^3$, using the envelope package of CGAL,
has been illustrated in~\cite{Hemmer2010}.
Some tighter combinatorial bounds are known for restricted cases in
$\R^3$, which include (near-)quadratic complexity bounds for various sites involving the Euclidean distance~\cite{Aronov2002,Koltun2002,Aurenhammer2017,Barequet2023}, as well as for convex distance functions 
induced by a polyhedron of constant complexity~\cite{Chew1998,Koltun2002a,Aurenhammer2021}.

Due to algebraic complexity, the Euclidean Voronoi diagram of lines,
or other 
generalized sites in 3D, remains a challenging problem.
Entire papers have been published for describing the Euclidean Voronoi diagram 
of only three~\cite{Everett2009} and recently four~\cite{papadopoulou2026} lines as base cases.
Unless properties of these diagrams are better understood, there
is little hope to progress on the well-known complexity gap 
between the $\Omega(n^2)$ and $O(n^{3+\eps})$ bounds~\cite{Mitchell2001}.

The unbounded features of the farthest (and order-$k$) Voronoi diagrams of lines and line segments 
in $\R^d$ were
studied in~\cite{Barequet2023}.
These features are encoded in the \emph{Gaussian map}, a map on the sphere of
directions, which has complexity $\Theta(n^{d-1})$ for the farthest diagram.
For the farthest Voronoi diagram in $\R^3$, the Gaussian map
can be computed in $O(n^2)$ time;
no algorithm to compute the farthest
Voronoi diagram was given in~\cite{Barequet2023}.
The possibility of an inward \emph{collapsing process} to construct the diagram, starting
at the Gaussian map, was proposed in~\cite{Barequet2013}.

In this paper,
we study the ``collapse process'' to construct the farthest Voronoi
diagram of lines in $\R^3$.
This process 
is widely used in two dimensions to construct tree-like Voronoi
diagrams, and we extend it to 3D for farthest Voronoi diagrams whose cells are 
unbounded.
Through this process, we gain more insight into the diagram's properties. 
The collapse process 
is of interest for diagrams whose cells are 
unbounded. 
It starts from the boundary
features, and maintains an active front while gradually discovering the entire
diagram inwards. 
It has been used in the areas of e.g.,
straight skeletons~\cite{Aichholzer1996}, farthest Voronoi
diagrams~\cite{Aurenhammer2006}, Brocard illumination of convex
polygons~\cite{Alegria2026}.

\subparagraph*{Contribution.}
We focus on the \emph{collapse process} for constructing the farthest Voronoi diagram of lines in $\R^3$. The process starts from the unbounded part of the diagram, encoded by a two-dimensional spherical map related to the Gaussian map~\cite{Barequet2023}. It then sweeps the upper envelope of the distance functions to each line in decreasing order. The level set of the upper envelope at any given distance value forms a labelled cell complex on a topological sphere, called the \emph{shrinking map}. 

We show a complete classification of the local structural changes of the
shrinking map during the collapse process.
In particular, there are exactly four non-terminal event types: \emph{deletion
  events, swap events, local minimum events}, and \emph{local maximum
  events}. Together with one terminal event, this list is complete. This
is in contrast  
to the planar collapse process for farthest diagrams of line segments,
which is 
governed by only one non-terminal event type similar to the deletion event~\cite{Aurenhammer2006}.
Among the four event types, the first two occur at a Voronoi vertex, and the
latter two occur at a local extremum 
along a trisector of three lines. 
%
We give an intrinsic three-dimensional classification of the vertex events by considering the tangent points from a vertex to its four defining lines, which lie on a common sphere centered at the vertex. If the spherical convex hull of these points is a triangle, the vertex gives a deletion event; if it is a quadrilateral, the vertex gives a swap event. This replaces the four-dimensional upper-envelope viewpoint by a purely three-dimensional (simpler) criterion.

For the 
distance function along a trisector, we prove that it has at most $4$ local
maxima and at most $8$ local minima, and that both bounds are tight. We also
show that all such extrema are algebraically computable by solving polynomials
of degree $12$.
As a byproduct, this gives a direct
method for finding the smallest sphere tangent to three given lines: compute the
local minima of the common distance along their trisector and choose the
minimum.
We also show that a local extremum is the center of a sphere that touches the three lines at points lying on a great circle. 

Algorithmically, the collapse process gives a 
framework for constructing
the farthest Voronoi diagram of a set of lines $L$, $\mathrm{FVD}(L)$.
By generating the relevant local-maximum events on the fly,
the resulting algorithm has time complexity 
$O(|\mathrm{FVD}(L)|\,k_{\max}\log n)$, after initialization, where $k_{\max}$ is the maximum face
size encountered during the collapse.
The initialization can take quadratic time in the number of unbounded faces
that bound each cell of the diagram, 
if implemented by brute force.
The removal of the additional dependence on $k_{\max}$ gives  a direction for future work.

The collapse viewpoint and the completeness of the event list also extend to
more general settings with convex sites and convex distance functions.


\section{Preliminaries}

Let $L=\{\ell_1,\dots,\ell_n\}$ be a set of $n$ lines in general position in
$\R^3$. For a point $p\in\R^3$ and a line $\ell\in L$, let $d(p,\ell)$ denote
the Euclidean distance from $p$ to $\ell$.
The \emph{farthest Voronoi region} of $\ell$ is
$\freg(\ell,L)=\{p\in\R^3 \mid \forall \ell'\in L\setminus \{\ell\},\ d(p,\ell)
> d(p,\ell')\}$.

The farthest Voronoi regions partition $\R^3$ into a cell complex, called the
\emph{farthest Voronoi diagram} of $L$, denoted $\fvd(L)$.
The diagram features are vertices, edges, 2D faces, and 3D cells.
A point of $\fvd(L)$ that is equidistant from exactly $k$ lines belongs to the
\emph{$k$-sector} of these lines, which is the locus of points equidistant from
$k$ lines. In particular, a $2$-sector is a \emph{bisector}, denoted by
$B(\cdot, \cdot)$, and a $3$-sector is a \emph{trisector}, denoted by $T(\cdot,
\cdot, \cdot)$.
Two trisectors are said to be \emph{related} if they involve exactly four distinct lines.

We assume the same general position assumptions as in~\cite{papadopoulou2026}: (1) lines are pairwise skew, and no three lines are parallel to a common
plane; (2) no sphere can be tangent to four lines with coplanar tangency points; (3) no sphere
is tangent to five lines; (4) no four lines have direction vectors whose representations on the
sphere of directions are cocircular. 
The general position assumption implies the following: all $k$-sectors have dimension $4{-}k$, Voronoi vertices of $\fvd(L)$ are incident to $4$ Voronoi edges, and related trisectors intersect transversely. 

The unbounded features of the diagram can be encoded as a two-dimensional
spherical map in two closely related ways as defined below.

\begin{definition} [\cite{Barequet2023}]
The Gaussian map of a cell complex $M$, denoted by $\gm(M)$, maps each cell in $M$ to its unbounded directions, which are encoded on the unit sphere $\mathbb{S}^2$. 
\end{definition}
The Gaussian map of the farthest Voronoi diagram of lines has complexity
$O(n^2)$ and can be computed in 
$O(n^2)$ time~\cite{Barequet2023}.

\begin{definition}[\cite{papadopoulou2026}]
Given a cell complex $M$, let $B$ be a ball large enough to intersect any
cell of $M$ in one connected component and let $\Gamma=\partial B$ be the
boundary sphere. The intersection of $M$ with $\Gamma$ is denoted by
$\Gamma(M)$, called the $\Gamma$-map of $M$.
\end{definition}
$\gm(M)$ is the limit of $\Gamma(M)$ as the radius of $\Gamma$ goes to $\infty$.
The $\Gamma$-map can be easily constructed from the Gaussian map, which will be shown in
Section~\ref{sec:outlook}. Next, we list some known facts on the farthest
Voronoi diagram of lines.

\begin{lemma}[Ray property~\cite{Barequet2023}]\label{lem:raysinfreg}
Let $p \in \freg(\ell, L)$ be a point in the
farthest Voronoi region of a line~$\ell\in L$, and let $t\in \ell$ be the point
on $\ell$  closest to~$p$.
Then, the ray that emanates from~$p$ with direction~$\overrightarrow{tp}$ is
entirely contained in~$\freg(\ell, L)$; further, the
distance to $\ell$ along this ray is strictly monotonically increasing.
\end{lemma}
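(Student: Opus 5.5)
The approach is to parametrize the ray as $r(s)=p+s\,u$ for $s\ge 0$, where $u=\overrightarrow{tp}/\|\overrightarrow{tp}\|$, and to compare $d(r(s),\ell)$ with $d(r(s),\ell')$ for every other line $\ell'\in L$. Note that $p\in\freg(\ell,L)$ gives $d(p,\ell)>d(p,\ell')\ge 0$, so $p\notin\ell$ and $u$ is well defined. Since $t$ is the foot of the perpendicular from $p$ to $\ell$, the vector $u$ is orthogonal to the direction of $\ell$.

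\emph{Distance to $\ell$ along the ray.} Every point of the ray projects orthogonally onto $\ell$ at the same foot $t$, because moving along $u$ does not change the component parallel to $\ell$. Hence
\[
d(r(s),\ell)=\|p-t\|+s .
\]
This is strictly increasing in $s$ and grows with slope exactly $1$. This already gives the second claim.

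\emph{Distance to any other line.} For $\ell'\neq\ell$, the function $s\mapsto d(r(s),\ell')$ is $1$-Lipschitz, since distance to a fixed set is $1$-Lipschitz and $\|r(s)-r(s')\|=|s-s'|$. Therefore
\[
d(r(s),\ell')\le d(p,\ell')+s< d(p,\ell)+s=d(r(s),\ell)
\]
for all $s\ge 0$. So every point of the ray lies in $\freg(\ell,L)$, which proves containment.

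The only step needing care is the claim that the projection foot stays at $t$. I would verify it by writing $\ell=\{t+\lambda v\}$ with $v\perp (p-t)$ and computing $\|r(s)-(t+\lambda v)\|^2=(\|p-t\|+s)^2+\lambda^2$, which is minimized at $\lambda=0$. The rest is immediate from the triangle inequality, so I expect no real obstacle.
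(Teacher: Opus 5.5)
Your proof is correct and complete. The foot of the perpendicular stays at $t$, which gives $d(r(s),\ell)=\|p-t\|+s$, and the $1$-Lipschitz bound preserves the strict inequality against every other line. The paper does not reprove this lemma; it cites Barequet et al. Your argument is the standard one, and your Lipschitz inequality is the analytic form of the usual nested-ball picture: the open ball of radius $d(p,\ell)+s$ centered at $r(s)$ contains the open ball of radius $d(p,\ell)$ centered at $p$, both are tangent to $\ell$ at $t$, and so every other line still meets the interior of the larger ball.
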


\begin{corollary}[~\cite{Barequet2023}]\label{cor:unboundedfarthestcells}
All 3D cells of the farthest Voronoi diagram of lines are unbounded.
\end{corollary}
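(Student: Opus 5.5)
The corollary follows directly from the Ray property (\cref{lem:raysinfreg}). The plan is to show that every nonempty farthest region $\freg(\ell,L)$ contains an unbounded subset. Since the 3D cells of $\fvd(L)$ are exactly the (closures of the) regions $\freg(\ell,L)$, this suffices.

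Fix a line $\ell\in L$ whose region is nonempty and pick any $p\in\freg(\ell,L)$. We may assume $p\notin\ell$: under general position every other line has positive distance from a point of $\ell$, so no point of $\ell$ lies in $\freg(\ell,L)$. Let $t\in\ell$ be the point of $\ell$ closest to $p$. Then $\overrightarrow{tp}$ is a nonzero vector orthogonal to $\ell$. By \cref{lem:raysinfreg}, the whole ray $\{p+\lambda\,\overrightarrow{tp}:\lambda\ge 0\}$ lies in $\freg(\ell,L)$.

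This ray is unbounded, so the region containing it is unbounded, and hence so is the corresponding cell. The lemma's additional statement, that the distance to $\ell$ is strictly increasing along the ray (it equals $(1+\lambda)|tp|$), confirms that the ray escapes every bounded ball.

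The only step needing care is this non-degeneracy, namely that $p\neq t$, so that the ray has a well-defined direction. This is settled by the observation above that points of $\ell$ have distance $0$ to $\ell$ and so cannot be farthest from $\ell$ when $n\ge 2$.

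Finally, each 3D cell is by definition a nonempty region, and there are no bounded cells beyond these. Therefore all cells are unbounded.
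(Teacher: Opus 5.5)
Your tool is the right one, since the corollary is meant to be read off from the ray property (Lemma~\ref{lem:raysinfreg}). However, the reduction from regions to cells is false. You assert that the 3D cells of $\fvd(L)$ are exactly the (closures of the) regions $\freg(\ell,L)$. From ``the region containing the ray is unbounded'' you then conclude that ``the corresponding cell'' is unbounded. A farthest region is in general disconnected: by Lemma~\ref{lem:cellcount}, each $\freg(\ell,L)$ consists of exactly $n-1$ 3D cells, i.e., $n-1$ connected components. Showing that $\freg(\ell,L)$ contains an unbounded ray therefore only shows that \emph{some} cell of $\ell$ is unbounded. It says nothing about the remaining cells of $\ell$. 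The statement that \emph{all} 3D cells are unbounded is exactly the stronger claim that no connected component of any region is bounded. Your closing sentence (``each 3D cell is by definition a nonempty region'') repeats the same misidentification.

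The missing step is a connectedness argument, and your construction supplies it once you start from a cell rather than a region. Let $C$ be an arbitrary 3D cell, i.e., a connected component of the open set $\freg(\ell,L)$, and pick $p\in C$. The ray $\{p+\lambda\,\overrightarrow{tp}:\lambda\ge 0\}$ lies in $\freg(\ell,L)$ by Lemma~\ref{lem:raysinfreg}, is connected, and contains $p$. Hence it lies entirely in the component $C$, so $C$ is unbounded. With this change, and after dropping the claim that cells coincide with regions, the rest of your argument is fine. In particular, $p\notin\ell$ holds, so $\overrightarrow{tp}\neq 0$. This already follows from the strict inequality in the definition of $\freg(\ell,L)$, without invoking general position.
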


\begin{lemma}[\cite{papadopoulou2026}]\label{lem:cellcount}
In $\fvd(L)$, the farthest Voronoi region of each line has exactly $n{-}1$ many 3D cells. 
\end{lemma}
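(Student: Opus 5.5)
The plan is to reduce the count of 3D cells of $\freg(\ell,L)$ to a count of connected pieces of the region at infinity, and then count those pieces on the sphere of directions. \textbf{Step 1 (cells $\leftrightarrow$ pieces at infinity).} Take a connected component $K$ of $\freg(\ell,L)$. By the ray property (Lemma~\ref{lem:raysinfreg}), every $p\in K$ flows along the ray from its foot point $t\in\ell$ in direction $\overrightarrow{tp}$, stays in $K$, and reaches infinity. This flow deformation-retracts $K$ onto $K\cap\Gamma$ for a sufficiently large sphere $\Gamma$. Hence $K\cap\Gamma$ is nonempty, which recovers Corollary~\ref{cor:unboundedfarthestcells}. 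It is also connected: two points of $K\cap \Gamma$ joined by a path in $K$ can be pushed outward along the rays onto $\Gamma$. By the defining property of $\Gamma$, each cell meets $\Gamma$ in one connected component. So the number of 3D cells of $\freg(\ell,L)$ equals the number of faces of $\ell$ in $\Gamma(\fvd(L))$. Equivalently, it is the number of faces of $\ell$ in $\gm(\fvd(L))$, once we identify which of these faces are the limits of a single face of the $\Gamma$-map.

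\textbf{Step 2 (leading-order asymptotics).} For a far point $p=ru+O(1)$ with $u\in\mathbb S^2$, we have $d(p,\ell_i)=r\,|u\times d_i|+O(1)$, where $d_i$ is the direction of $\ell_i$. To first order, the farthest line in direction $u$ is therefore one maximizing $|u\times d_i|$. The maximum possible value is $1$, and it is attained exactly on the great circle $C_i=d_i^{\perp}$. Consequently the region of $\ell$ at infinity concentrates on $C_\ell$: off $C_\ell$ the set where $\ell$ wins is given by $|u\cdot d_\ell|<|u\cdot d_j|$ for all $j$. Near $C_\ell$ the first-order term ties $\ell$ only against lines $\ell_j$ whose circle $C_j$ crosses $C_\ell$ at the same point. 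Under general position assumptions (1) and (4), each $C_j$ with $j\neq\ell$ crosses $C_\ell$ transversally in exactly two antipodal points $\pm (d_\ell\times d_j)/|d_\ell\times d_j|$, and all these $2(n-1)$ points are distinct. Near such a crossing, the second-order term of $d(p,\ell)-d(p,\ell_j)$, which involves the offsets of the lines, decides which line wins. This cuts the band of $\ell$ around $C_\ell$ at each crossing and leaves $2(n-1)$ arcs.

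\textbf{Step 3 (antipodal pairing), the main obstacle.} I expect the hardest part to be showing that the $2(n-1)$ pieces merge into exactly $n-1$ cells. This requires both that no two non-antipodal arcs lie in the same 3D cell, and that each arc and its antipodal arc do lie in the same cell. For the merging, I would argue geometrically in 3D. Directions $u\in C_\ell$ are exactly the directions perpendicular to $\ell$. An arc of $C_\ell$ between consecutive crossings, together with its antipode, sweeps out a wedge of half-planes perpendicular to $\ell$ that is bounded by the bisectors with the two lines $\ell_j,\ell_{j'}$ at the ends of the arc. The two antipodal wedges are connected through points near $\ell$ at large distance along $\ell$, where the bisectors $B(\ell,\ell_j)$ are close to planes perpendicular to projections of $d_\ell\times d_j$. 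I would make this precise by parametrizing the farthest region in the plane orthogonal to $\ell$ at height $z\to\pm\infty$ along $\ell$, where the bisector structure becomes an arrangement of lines through the projection. For the separation, I would show that two arcs which are not antipodal are separated by a bisector surface $B(\ell,\ell_j)$ that meets infinity in a curve through the crossing points $\pm (d_\ell\times d_j)$. Combining these gives exactly $(2(n-1))/2=n-1$ cells, consistent with Lemma~\ref{lem:cellcount}.
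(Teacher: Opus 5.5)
Your Step~1 and the first-order analysis in Step~2 are essentially fine. The last sentence of Step~2 is wrong, though, and Step~3 rests on it. The second-order term does \emph{not} cut the band of $\ell$ at every crossing. Write $\ell_i=\{a_i+td_i\}$ with $|d_i|=1$, let $v=(d_\ell\times d_j)/|d_\ell\times d_j|$, and take $p=rv+w$ with $w\perp v$. Expanding gives
\[
d(p,\ell)^2-d(p,\ell_j)^2=2r\,v\cdot(a_j-a_\ell)+(w\cdot d_j)^2-(w\cdot d_\ell)^2+O(1+|w|).
\]
Here $v\cdot(a_j-a_\ell)\neq 0$ because the lines are skew, and its sign flips when $v$ is replaced by $-v$. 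So at exactly one of the two antipodal crossings $\pm v$, the region of $\ell$ passes through a hyperbolic neck. There it joins its two neighbouring pieces along $C_\ell$ and splits the region of $\ell_j$. At the other crossing the roles are reversed, and $\ell_j$ cuts the band of $\ell$. This is the resolution of vertices of anomaly described in Section~\ref{sec:gamma}. Hence on $\Gamma$ the band around $C_\ell$ is cut at exactly $n-1$ points, one for each $\ell_j\neq\ell$. It therefore has exactly $n-1$ faces, each a run of \emph{consecutive} arcs, and together with your Step~1 this proves the lemma.

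One detail should still be made explicit: the first-order region retracts onto $C_\ell$ minus the crossings, along meridians through $\pm d_\ell$. Along $u=\cos s\,w+\sin s\,d_\ell$ with $w\in C_\ell$ and $t=|\tan s|$, the condition $|u\cdot d_j|>|u\cdot d_\ell|$ reads $|\alpha\pm t\beta|>t$, where $\alpha=w\cdot d_j$ and $\beta=d_\ell\cdot d_j$. The left side minus $t$ is strictly decreasing in $t$ because $|\beta|<1$.

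The antipodal pairing you aim for in Step~3 is false for $n\ge 3$. Your three steps are also mutually inconsistent. If the $2(n-1)$ arcs of Step~2 were distinct faces on $\Gamma$, Step~1 would already give $2(n-1)$ cells, and no merging in space could change that. For a concrete counterexample with $n=3$, let the crossings on $C_\ell$ be $v_1,v_2,-v_1,-v_2$ in cyclic order, bounding arcs $A_1,\dots,A_4$ with $A_3=-A_1$ and $A_4=-A_2$. The cuts are one point of $\{\pm v_1\}$ and one of $\{\pm v_2\}$, so they are two cyclically adjacent crossings. The two faces of $\ell$ therefore always consist of one arc and of three consecutive arcs, never of $A_1\cup A_3$ and $A_2\cup A_4$.

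Your separation claim fails as well, since two arcs meeting at a crossing where $\ell$ wins lie in the same cell. The proposed link ``through points near $\ell$ at large distance along $\ell$'' cannot exist either. Such points are \emph{nearest} to $\ell$. Moreover, the directions $\pm d_\ell$ lie outside the region of $\ell$ at infinity, since $|u\cdot d_\ell|=1$ there.

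The missing idea is the antipodal sign asymmetry of the second-order term. It yields the count $n-1$ as one cut per other line, not as $2(n-1)$ arcs divided by two.
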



\section{The Collapse Process and Five Event Types}\label{sec:collapse}

We describe the collapse process for the farthest Voronoi diagram of a set $L$ of lines in~$\R^3$. Informally, let the priority of a point be the farthest distance from a line in $L$. The collapse process scans the farthest diagram in decreasing priority, starting from infinity. During the process, it maintains a labelled cell complex, called the \emph{shrinking map} that contains all points of the same priority. As the priority decreases, the shrinking map maintains its structure except at a few discrete critical values, which are called the \emph{events} of the collapse process.

Formally, for a point $x \in \mathbb{R}^3$, define its \emph{priority} as $\pi(x) = \max_{\ell \in L} d(x, \ell)$.
If $x$ lies in a feature (vertex, edge, or face) of $\fvd(L)$, then $\pi(x)$ is the (common) distance from $x$ to each of its defining lines. 

\begin{definition}[Shrinking map]\label{def:shrinking-map}
For $\lambda \in \mathbb{R}$, the \emph{shrinking map} at priority $\lambda$ is
$M_\lambda = \{x \in \mathbb{R}^3 \mid \pi(x) = \lambda\}$. Each point $x\in M_\lambda$ is labelled by the set
$\{\ell\in L\mid d(x,\ell)=\pi(x)\}$
of lines that realize the farthest distance from $x$.
The labeling induces a cell complex, where the points of one cell have the same
label.
\end{definition}

In other words, the shrinking map $M_\lambda$ is the intersection of the upper envelope of the graphs of the distance functions $d(x, \ell_i)$ with the hyperplane $x_4=\lambda$.

\begin{lemma}\label{lem:shrinking-sphere}
Let $\lambda_0=\min_{x\in\R^3}\pi(x)$. For every $\lambda>\lambda_0$,
the shrinking map $M_\lambda$ is a topological sphere. For every $\lambda<\lambda_0$,
it is empty.
\end{lemma}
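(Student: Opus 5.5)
The plan is to show that the sublevel set $K_\lambda=\{x\mid \pi(x)\le\lambda\}$ is, for $\lambda>\lambda_0$, a compact convex body with nonempty interior, and that its boundary is exactly $M_\lambda$. The boundary of such a body is homeomorphic to $\mathbb{S}^2$ via radial projection from an interior point.

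\textbf{Convexity and emptiness.} Each $d(\cdot,\ell)$ is convex, being the Euclidean norm of the orthogonal projection onto the plane perpendicular to $\ell$. A maximum of convex functions is convex, so $\pi$ is convex and $K_\lambda$ is convex. The case $\lambda<\lambda_0$ is immediate from the definition of $\lambda_0$ as a minimum, so $M_\lambda=\emptyset$ there.

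\textbf{The minimum exists and $K_\lambda$ is bounded.} Here the general position assumptions are used. Suppose $\pi$ were not coercive. Then there would be points $x_k\to\infty$ with bounded $\pi(x_k)$, and a direction $u$ limit of $x_k/|x_k|$. Every line would then have direction parallel to the plane $u^\perp$; otherwise $d(x_k,\ell)\ge |x_k|\,|u\times v_\ell|-C\to\infty$. Already for $n\ge 3$ this contradicts assumption (1), which says no three lines are parallel to a common plane. The case $n=2$ would need separate handling, or is excluded by the setting. So $\pi(x)\to\infty$ as $|x|\to\infty$. Hence $\pi$ attains its minimum $\lambda_0$, and every $K_\lambda$ is compact.

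\textbf{Nonempty interior and boundary identification.} For $\lambda>\lambda_0$, let $x_0$ be a minimizer. By continuity there is a small ball around $x_0$ with $\pi<\lambda$, so $K_\lambda$ has interior. Next I show $\partial K_\lambda=M_\lambda$. Points with $\pi<\lambda$ are interior by continuity. Conversely, if $\pi(x)=\lambda$, then along the ray from $x_0$ through $x$ the convex function $\pi$ satisfies $\pi(x_0)<\lambda=\pi(x)$. Convexity then forces $\pi>\lambda$ strictly beyond $x$, so $x$ is a boundary point.

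\textbf{Sphere structure.} By the same convexity argument, each ray from $x_0$ meets $M_\lambda$ in exactly one point. Radial projection $M_\lambda\to\mathbb{S}^2$ is therefore a continuous bijection from a compact space to a Hausdorff space, hence a homeomorphism. The cell complex labelling plays no role here.

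\textbf{Main obstacle.} The main obstacle is the coercivity step, specifically making the asymptotic direction argument rigorous and handling degenerate small $n$. I would try a direct bound first: $\max_i |u\times v_i|\ge c>0$ for all unit $u$, which follows by compactness of $\mathbb{S}^2$ once no $u$ is orthogonal to all $v_i$.
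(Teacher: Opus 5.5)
Your proof is correct and follows the paper's route: $\pi$ is convex and coercive, so $K_\lambda$ is compact and convex with nonempty interior for $\lambda>\lambda_0$, and $M_\lambda=\partial K_\lambda\cong\mathbb{S}^2$; you spell out the coercivity, the boundary identification, and the radial projection, which the paper only asserts. One slip in the coercivity step: $u\times v_\ell=0$ means $\ell$ is parallel to $u$, not to the plane $u^\perp$, so bounded $\pi(x_k)$ would force all lines to share the direction $u$, which pairwise skewness already rules out for $n=2$ --- hence you need neither the ``no three lines parallel to a plane'' clause nor a separate $n=2$ case, and the condition in your last paragraph should read ``no $u$ is parallel to all $v_i$'', which holds automatically.
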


\begin{proof}
Since $d(x, \ell)$ is convex for each $\ell$, $\pi(x)$ is also convex. Moreover, $\lim_{\|x\|\to\infty} \pi(x)=\infty$. Hence, for every $\lambda$, the sublevel set $K_\lambda=\{x\in\R^3\mid \pi(x)\le \lambda\}$
is compact and convex. Further, $M_\lambda=\partial K_\lambda$.
If $\lambda<\lambda_0$, then $K_\lambda=\emptyset$, and thus
$M_\lambda=\emptyset$. If $\lambda>\lambda_0$, then $K_\lambda$ has nonempty interior. Thus, $K_\lambda$ is homeomorphic to a ball and $M_\lambda$ is homeomorphic to $\mathbb S^2$.
\end{proof}

Consequently, for all values $\lambda$ such that $M_\lambda$ is nonempty, we can view the shrinking map as a planar cell complex on the sphere. Its faces (resp.\ edges and vertices) correspond to farthest Voronoi regions (resp.\ faces and edges).

For sufficiently large $\lambda$, the shrinking map $M_\lambda$ is topologically
equivalent to the $\Gamma$-map of $\fvd(L)$.
The collapse process starts from this particular cell complex and decreases
$\lambda$.

\subsection{Events of the Collapse Process}

In this section, we show that the collapse process for the farthest Voronoi diagram of three-dimensional lines has only five types of events: four non-terminal events and a terminal event. For comparison, the analogous process for the 2D farthest Voronoi diagram of line segments has only one non-terminal event. 
We first list the event types. Figure~\ref{fig:events} shows the effects of the events by depicting the local pieces of the shrinking map before/at/after the events.

\begin{figure}[!htbp]
\centering
\begin{tabular}{m{2.6cm} | m{2.4cm} | m{2.4cm} | m{2.4cm}}
    \textbf{Event name} & \textbf{Before} & \textbf{At} & \textbf{After} \\ \hline \hline
    \textbf{Deletion} & 
    \resizebox{0.10\textwidth}{!}{\ThreeFaceVertex{0.5}} & 
    \resizebox{0.10\textwidth}{!}{\ThreeFaceVertex{0.05}} & 
    \resizebox{0.10\textwidth}{!}{\ThreeFaceVertex{0}} \\ \hline
    \textbf{Swap} & 
    \resizebox{0.10\textwidth}{!}{\TwoFaceVertex{50}} & 
    \resizebox{0.10\textwidth}{!}{\TwoFaceVertex{0}} & 
    \resizebox{0.10\textwidth}{!}{\TwoFaceVertex{-50}} \\ \hline
    \textbf{Local minimum} & 
    \resizebox{0.10\textwidth}{!}{\ThreeFaceEdge{0.5}} & 
    \resizebox{0.10\textwidth}{!}{\ThreeFaceEdge{0.05}} & 
    \resizebox{0.10\textwidth}{!}{\ThreeFaceEdge{0}} \\ \hline
    \textbf{Local maximum} & 
    \resizebox{0.10\textwidth}{!}{\TwoFaceEdge{-50}} & 
    \resizebox{0.10\textwidth}{!}{\TwoFaceEdge{0}} & 
    \resizebox{0.10\textwidth}{!}{\TwoFaceEdge{50}} \\ \hline
    \textbf{Terminal event} & 
    \resizebox{0.10\textwidth}{!}{\ThreeFaceTwoFace{50}} & 
    \resizebox{0.10\textwidth}{!}{\ThreeFaceTwoFace{3}} & 
    end \\ \hline
\end{tabular}
\caption{List of all collapse events. The columns 2-4 show how the shrinking map changes locally at each event, where a local piece of the shrinking map is depicted before, at, and after the event. Different colors represent faces of different labels.}\label{fig:events}
\end{figure}

\subparagraph*{Deletion event and swap event.}
These are two events associated with a Voronoi vertex. At a deletion event, a shrinking-map face bounded by three edges disappears at a Voronoi vertex. At a swap event, no map face is created or deleted; one map edge disappears, and a different pair of map faces becomes adjacent along a new map edge.

\subparagraph*{Local minimum event and local maximum event.}
At a local minimum event, a shrinking-map face bounded by two edges disappears. At a local maximum event, two edges incident to a common shrinking-map face meet and split the face into two.
They occur, respectively, at a local minimum and a local maximum of the common distance function along a trisector of three lines.

\subparagraph*{Terminal event.}
This event shrinks two faces at the same time, after which the shrinking map is empty. In Lemma~\ref{lem:stop-final}, we show if it occurs, it is the last event of the collapse process. 

\begin{theorem}[Completeness of the event list]\label{thm:complete_event_list}
Let $L$ be a set of lines in $\R^3$ in general position. During the collapse
process, with decreasing priority $\lambda$,
between any two consecutive event priorities, the shrinking map $M_\lambda$ remains
 the same combinatorially.  At any event priority, the change is one of the five types listed above: deletion event, swap event, local minimum event, local maximum event, or terminal event. 
\end{theorem}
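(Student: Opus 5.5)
We would treat the shrinking map $M_\lambda$ as the level sets of the convex function $\pi$ restricted to the features of $\fvd(L)$, and organize the argument as a stratified Morse theory. Each 3D cell of $\freg(\ell,L)$ carries $\pi = d(\cdot,\ell)$, each 2D face of $B(\ell_i,\ell_j)$ carries the common distance, each edge on a trisector $T(\ell_i,\ell_j,\ell_k)$ carries the common distance, and vertices are isolated points. The combinatorial structure of $M_\lambda$ is determined by which features of $\fvd(L)$ it crosses, and by how. Hence it can change only at critical values of $\pi$ restricted to some stratum.

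\textbf{Step 1: no critical points in 3D cells or on 2D faces.} Inside a 3D cell of $\ell$, the gradient of $d(\cdot,\ell)$ is the unit vector $\overrightarrow{tp}$, which never vanishes (cf. Lemma~\ref{lem:raysinfreg}). On a bisector face of $B(\ell_i,\ell_j)$, a critical point of the restricted distance requires the two unit gradients $u_i,u_j$ to be parallel to the normal of the bisector. Since the normal is proportional to $u_i-u_j$, this forces $u_i=-u_j$. We would argue that at such a point $\pi$ has a local minimum over all of $\R^3$. By convexity of $\pi$ it would then be the global minimum $\lambda_0$, which is exactly the terminal event. We would check that in general position the global minimum lies either on a bisector face (two faces vanish simultaneously: the terminal event), on a trisector edge, or at a vertex. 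The last two cases turn out to be the last local minimum or deletion events, via Lemma~\ref{lem:shrinking-sphere}. Away from $\lambda_0$, by the implicit function theorem applied stratum-wise, $M_\lambda$ moves by an isotopy that preserves its labelled cell structure. This gives the first claim of the theorem, between consecutive critical priorities.

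\textbf{Step 2: trisector edges.} On a trisector edge, $\pi$ is a smooth function of one variable. Its critical points are isolated by general position, and we assume they are nondegenerate. At a local minimum, the two nearby points of the trisector at level $\lambda$ merge and vanish as $\lambda$ decreases. These two points are the vertices of $M_\lambda$ bounding the face of the edge's third region that lies between them, a face with two edges. This is the local minimum event. At a local maximum, two vertices appear as $\lambda$ decreases, and a new bigon face is born. Equivalently, two edges of the face labelled by the third line touch and split that face, which is the local maximum event. Here we would use the local picture: near a trisector, the three bisector faces meet at $120^\circ$-like angles, transversally to the level set.

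\textbf{Step 3: vertices.} A Voronoi vertex $v$ of $\fvd(L)$ lies on four edges, and $\pi$ restricted to the star of $v$ is piecewise smooth. Each of the four incident edges is either ascending or descending from $v$; none is critical at $v$, by general position. Since $\pi$ is convex, $v$ cannot be a local maximum, so not all four edges descend. A neighborhood of $v$ is a cone over the link, which is a tetrahedral map on a small sphere with four regions. The change in $M_\lambda$ is determined by which link vertices lie above or below. We would split into cases by the number of descending edges:
\begin{itemize}
\item $0$ descending edges: $v$ is a local minimum of $\pi$, hence the global minimum, which is a terminal or final deletion event.
\item $1$ descending edge, or $3$ descending edges: a triangular face of $M_\lambda$ shrinks to $v$ and disappears, the deletion event.
\item $2$ descending edges: an edge of $M_\lambda$ shrinks to $v$ and a new edge appears between the other pair of faces, the swap event.
\end{itemize}
One of the cases $1$ or $3$ must be ruled out or merged by convexity. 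We expect the case with three descending edges to be the one that produces deletion as $\lambda$ decreases. Distinct events occur at distinct priorities by general position.

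\textbf{Main obstacle.} We expect the main obstacle to be rigor in Step 1 and at vertices: justifying that nonsmooth convex $\pi$ has no other topological changes, and excluding degenerate critical points. We would first try to invoke general-position assumptions (2)–(4) to guarantee nondegeneracy, and then apply a stratified Morse lemma (a Thom isotopy argument) to obtain the combinatorial invariance between events.
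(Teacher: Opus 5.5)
Your stratified-Morse route differs from the paper's proof, which is combinatorial rather than Morse-theoretic. The paper derives ``no face creation, no merge, no new holes'' from the ray property (Proposition~\ref{prop:collapse}). It then runs a case analysis on how faces vanish, change adjacency, or split, and excludes the touch event by transversality of related trisectors. Your Step~1 is sound: a critical point of $\pi$ on a bisector face forces $u_i=-u_j$, i.e.\ the midpoint of the common perpendicular, and that point is then the global minimum.

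The gap is in Steps~2 and~3. There the orientation is reversed, and the configurations that would create faces are never excluded. At a vertex $v$ of $\ell_1,\dots,\ell_4$, deletion is the case of exactly \emph{one} descending edge. The three vertices of the shrinking triangle of $\ell_m$ lie on the three ascending arcs, namely the trisectors containing $\ell_m$. The single vertex that survives below $\rho(v)$ lies on the descending arc of the trisector of the other three lines. This is the paper's ``three of the four conditions of Lemma~\ref{lem:fvd-side-test} hold''. Three descending edges would instead mean that a triangular face of $\ell_m$ is \emph{born} at $v$. Likewise, in Step~2, a bigon born at a local maximum is not equivalent to a face split; it is a spontaneous face creation. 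The actual local maximum event creates a new edge between the faces of two of the lines, and this edge splits the face of the third.

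Convexity of $\pi$ cannot exclude these births. It only rules out four descending edges, because the four rays positively span $\R^3$ and the directional derivative of $\pi$ at $v$ is sublinear. For example, take $\max\{-w_z,\ a_i\cdot(w_x,w_y)-2w_z : i=1,2,3\}$ with $a_1,a_2,a_3$ unit vectors at $120^\circ$. It is convex and has a complete tetrahedral fan at the origin. Yet its region $\{w_z\ge\max_i a_i\cdot(w_x,w_y)\}$, where $-w_z$ is the maximum, is entirely descending: three rays descend and one ascends.

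What excludes such configurations is that every $g_i=\nabla d_i$ is a \emph{unit} vector. Then $g_m\cdot g_m=1\ge g_j\cdot g_m$, so $g_m$ lies in the tangent cone of the cell of $\ell_m$, and $\pi$ increases along it. Hence every cell at a vertex, or at a trisector critical point, has an ascending part. This is the infinitesimal form of Lemma~\ref{lem:raysinfreg}, the ingredient the paper's proof rests on and yours never uses. With it, the vertex cases become: $0$ descending edges is the final event, $1$ is deletion, $2$ is swap, and $3$ or $4$ is impossible.

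At a trisector critical point, $g_1,g_2,g_3$ are coplanar. If they surround the origin, the point is the global minimum. Otherwise, the face of the line whose $g$ lies between the other two is the one that forms the vanishing bigon (at a minimum) or is split (at a maximum). Note also that the level set is tangent to the trisector there, so the transversality you invoke in Step~2 fails exactly at these points.
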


In the remainder of this section we prove Theorem~\ref{thm:complete_event_list}.
In Sections~\ref{sec:vertex-events} and~\ref{sec:trisector-extrema}, we characterize vertex events and local extremum events respectively in three-dimensional space.


\subsection{Completeness of the Event List}\label{sec:event-completeness}

We prove \cref{thm:complete_event_list} by ruling out all other local changes of the shrinking map. We organize the proof through a sequence of structural observations in Proposition~\ref{prop:collapse}. 
Recall the ray property from Lemma~\ref{lem:raysinfreg}. It gives a map from lower priority to higher priority within the same farthest region. This leads to the following observations.

\begin{proposition}\label{prop:collapse}
	During the collapse process, as $\lambda$ decreases, the following hold for the shrinking map $M_\lambda$. 
	\begin{enumerate}
		\item (No spontaneous face creation). No face can appear without an ancestor at higher priority.
		\item (No face merge). Distinct faces of $M_\lambda$ labelled by the same line cannot merge.
		\item (No new holes). No new holes can appear on an existing face.
	\end{enumerate}
\end{proposition}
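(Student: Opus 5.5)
The plan is to derive all three items from one construction: a continuous, injective "lifting" map that the ray property (Lemma~\ref{lem:raysinfreg}) provides inside each farthest region. Fix a line $\ell$ and $\lambda<\lambda'$. For $x\in M_\lambda$ in the closure of $\freg(\ell,L)$ (with $d(x,\ell)=\lambda>0$), let $t\in\ell$ be the closest point to $x$. Define $\phi_{\lambda\to\lambda'}(x)$ as the point on the ray from $x$ in direction $\overrightarrow{tx}$ at distance $\lambda'$ from $\ell$.

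Along this ray, the foot point on $\ell$ is always $t$, because the ray lies in the plane through $t$ orthogonal to $\ell$. Hence $d(\cdot,\ell)$ grows linearly from $\lambda$ to $\infty$, so $\phi$ is well defined. By Lemma~\ref{lem:raysinfreg} the image lies in $\freg(\ell,L)$ (or its closure, for points of $x$ on the region's boundary), and therefore in the $\ell$-labelled part of $M_{\lambda'}$. The map is continuous, being built from the nearest-point projection to $\ell$. It is injective: two distinct points with the same image would give two rays perpendicular to $\ell$ meeting at a point at distance $\lambda'$, and this forces the same foot $t$ and the same ray. So the $\ell$-labelled part of $M_\lambda$ embeds continuously into the $\ell$-labelled part of $M_{\lambda'}$.

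\textbf{(1) No spontaneous face creation.} Let $F$ be a face of $M_\lambda$ labelled $\ell$. Then $\phi_{\lambda\to\lambda'}(F)$ is a nonempty connected set in the $\ell$-labelled part of $M_{\lambda'}$, so it lies in a single face there. As $\lambda'\downarrow\lambda$ this image varies continuously and converges to $F$, which gives the ancestor. Equivalently, the sets $\bigcup_{\mu\ge\lambda}(\freg(\ell,L)\cap M_\mu)$ of the 3D cell containing $F$ already meet $M_{\lambda'}$ for every $\lambda'>\lambda$. This also agrees with Corollary~\ref{cor:unboundedfarthestcells}.

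\textbf{(2) No face merge.} Suppose two faces $F_1,F_2$ labelled $\ell$ at priority $\lambda'$ merge into one face $F$ at priority $\lambda<\lambda'$. Take a path $\gamma\subset F$ joining points descending from $F_1$ and from $F_2$. Then $\phi_{\lambda\to\lambda'}(\gamma)$ is a path inside the $\ell$-labelled part of $M_{\lambda'}$ connecting $F_1$ and $F_2$, a contradiction. To ensure the two endpoints really descend from $F_1$ and $F_2$, I would argue at the level of 3D cells. The portion of $\freg(\ell,L)$ at priority $\ge\lambda$ is a union of ray-saturated sets, so two points of the same face at level $\lambda$ lie in the same connected component of $\freg(\ell,L)\cap\{\pi\ge\lambda\}$. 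Since $\phi$ maps this component into level $\lambda'$ continuously, the pieces at $\lambda'$ would be connected.

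\textbf{(3) No new holes.} A new hole means the following: a face $F$ at higher priority is a disk around some region, and at lower priority the face contains a non-contractible loop $\gamma$ whose interior, on the sphere $M_\lambda$, contains a newly appearing face of some other label $\ell'$. By item (1), that inner face has an ancestor at every higher priority $\lambda'$. Applying $\phi$ for $\ell$ maps $\gamma$ to a loop in the $\ell$-face at level $\lambda'$. The loop still separates: $\phi$ extends to a homotopy of loops along the family of spheres $M_\mu$, $\mu\in[\lambda,\lambda']$, which can be identified, for example, by radial projection from an interior point of the convex set $K_\lambda$ of Lemma~\ref{lem:shrinking-sphere}. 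The ancestors of the enclosed faces then stay enclosed, because a face of another label cannot cross the $\ell$-labelled loop. Hence the hole already existed at $\lambda'$.

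The main obstacle is making this topological bookkeeping rigorous: identifying the spheres $M_\mu$ across priorities, and showing that ancestors of other-labelled faces cannot cross the lifted $\ell$-loops. I expect this to go through by combining the injective maps $\phi$ for all labels at once, viewed as a continuous deformation.
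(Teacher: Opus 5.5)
Your map $\phi_{\lambda\to\lambda'}$ is the paper's map $f$, and your item~(1) is the paper's argument. The patch you give in item~(2), for making the endpoints of $\gamma$ ``really descend'' from $F_1$ and $F_2$, works in the wrong set.

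\textbf{Item (2).} Connectivity in $\freg(\ell,L)\cap\{\pi\ge\lambda\}$ does not imply connectivity at level $\lambda'$. After a local maximum event at a priority $\mu>\lambda'$ (Corollary~\ref{cor:localmaxeventsplit}), the descendants at level $\lambda'$ of the two split faces are distinct. Yet they lie in the same 3D cell and are joined through points of priority above $\mu$. Also, $\phi$ cannot move points of priority above $\lambda'$ down to level $\lambda'$ inside $\freg(\ell,L)$. The ray property only gives upward closure, and the segment toward $\ell$ may leave the region. The paper instead formalizes the merge as a path $\xi$ from $p_1\in F_1$ to $p_2\in F_2$ inside $\freg(\ell,L)\cap\{\pi\le\lambda'\}$. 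It then pushes every point of this sublevel part up its own ray to level $\lambda'$. That map is a continuous retraction onto $A_\ell(\lambda')$ fixing $p_1,p_2$, so $f(\xi)$ joins $F_1$ to $F_2$ inside $A_\ell(\lambda')$. Your argument goes through if ``descends from $F_i$'' means joined to $F_i$ by a path in the slab $\freg(\ell,L)\cap\{\lambda\le\pi\le\lambda'\}$, with $\phi$ extended to that slab in the same way.

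\textbf{Item (3).} Here you leave the decisive step open, but the issue you flag is real. The paper's proof only records that $f$ is a homeomorphism $D\to f(D)$. That alone does not exclude that the $\ell$-face $F\supseteq f(D)$ at level $\lambda'$ fills the hole of $f(D)$.

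Gluing the maps $\phi$ of all labels into one deformation will not work, since two adjacent labels send a common boundary point to different points. You only need one curve and one point at a time:
\begin{enumerate}
\item Radial projection from a point $c$ with $\pi(c)<\lambda$ identifies the shell $\{\lambda\le\pi\le\lambda'\}$ with $\mathbb{S}^2\times[\lambda,\lambda']$ level by level. This holds because, along every ray from $c$, the convex function $\pi$ is strictly increasing once it exceeds $\pi(c)$.
\item The annulus $\Sigma=\bigcup_{x\in\gamma}[x,\phi_{\lambda\to\lambda'}(x)]$ lies in $\freg(\ell,L)$. By injectivity, it meets level $\mu$ exactly in the Jordan curve $\phi_{\lambda\to\mu}(\gamma)$.
\item Take a point $y\in\freg(\ell',L)$, $\ell'\neq\ell$, on one side of $\gamma$, and a point $z$ of another line's region on the other side. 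The ray property for their lines gives paths $\mu\mapsto y_\mu$ and $\mu\mapsto z_\mu$ with one point per level, both disjoint from $\Sigma$.
\item Whether two continuously moving points avoiding a continuously moving Jordan curve on $\mathbb{S}^2$ lie on the same side cannot change; this follows, for example, from a winding-number argument.
\end{enumerate}
Hence $y_{\lambda'}$ and $z_{\lambda'}$ are non-$\ell$ points on opposite sides of $\phi_{\lambda\to\lambda'}(\gamma)\subset F$, so $F$ already has the hole.
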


\begin{proof}
For a line $\ell\in L$, let $A_{\ell}(\lambda) = M_{\lambda}\cap \freg(\ell, L)$; then its connected components are the 
faces of $M_\lambda$ labelled by $\ell$. Throughout the proof, let $\lambda_1<\lambda_2$ be two values in the shrinking process. Consider a point $p\in A_\ell(\lambda_1)$. By Lemma~\ref{lem:raysinfreg}, the ray $r(p)$ emanating from $p$ away from $\ell$ is contained in $\freg(\ell,L)$, and the priority along it is increasing. Hence, $r(p)$ intersects $M_{\lambda_2}$ at a unique point, denoted by $f(p)$. This gives a continuous function
$f: A_\ell(\lambda_1) \to A_\ell(\lambda_2)$.

\emph{No spontaneous face creation.} Let $p\in A_\ell(\lambda_1)$, then $f(p)\in A_\ell(\lambda_2)$. Hence, every point of a face of $\ell$ at priority $\lambda_1$ has an ancestor in the same farthest region at a larger priority $\lambda_2$.

\emph{No face merges.} Assume that two distinct faces $f_1$ and $f_2$ of a line $\ell$ on $M_{\lambda_2}$ merge at a smaller priority $\lambda_1$. Choose points $p_1\in f_1$ and $p_2\in f_2$. By the merge, there is a path $\xi:p_1\to p_2$ contained in $\freg(\ell,L)$ and consisting only of points with priority at most $\lambda_2$.
Consider the set $f(\xi)\subseteq A_\ell(\lambda_2)$, with $f(p_1)=p_1$ and $f(p_2)=p_2$. Since $f$ is continuous, $f(\xi)$ is a path in $A_\ell(\lambda_2)$ connecting $f_1$ and $f_2$, contradicting that $f_1$ and $f_2$ are distinct faces. See Figure~\ref{fig:mergeface} for an illustration.

\begin{figure}[h]
	\centering
	\includegraphics[scale=0.75]{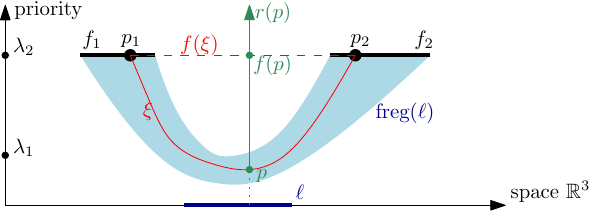}
	\caption{Impossibility of merging two faces during the collapse process.}\label{fig:mergeface}
\end{figure}

\emph{No new holes.} Let $D$ be a connected component of $A_\ell(\lambda_1)$. 
First, we show that the map $f$ is injective: two points on $M_{\lambda_1}$ that map to the same point of $M_{\lambda_2}$ must lie on the same ray orthogonal to $\ell$, and such a ray contains exactly one point at distance $\lambda_1$ from $\ell$. The inverse map sends a point of $f(D)$ to the unique intersection of the corresponding segment toward $\ell$ with $M_{\lambda_1}$, thus is continuous. Hence, $f$ is a homeomorphism from $D$ onto $f(D)$.
Thus, all holes of $D$ are already present in $f(D)$ at the larger priority $\lambda_2$.
\end{proof}

\begin{lemma}\label{lem:stop-final}
If the terminal event occurs during the collapse process, it is the last event. Further, it corresponds to a global minimum of the common distance on a bisector.
\end{lemma}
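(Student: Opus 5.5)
The plan is to exploit the convexity established in the proof of Lemma~\ref{lem:shrinking-sphere}. By definition, the terminal event shrinks two faces of $M_\lambda$ simultaneously, after which the shrinking map is empty. So if it occurs at priority $\lambda^*$, then $M_\lambda=\emptyset$ for all $\lambda<\lambda^*$. By Lemma~\ref{lem:shrinking-sphere}, $M_\lambda$ is a nonempty topological sphere for every $\lambda>\lambda_0=\min_{x\in\R^3}\pi(x)$ and empty for $\lambda<\lambda_0$. Hence $\lambda^*=\lambda_0$, and no event can occur at a priority below $\lambda_0$, because there the map is empty. This already gives the first claim: the terminal event happens at the global minimum value of $\pi$ and is the last event. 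Since $\pi$ is a maximum of convex functions, $K_{\lambda_0}=\{x\mid \pi(x)\le\lambda_0\}$ is the convex set of minimizers. I still need to argue it is a single point, or at least that it carries no further structure to process.

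Next I identify the locus. Just above $\lambda_0$, the two surviving faces are labelled by two lines $\ell_1,\ell_2$; a face is a component of $M_\lambda\cap\freg(\ell,L)$, and two faces with the same label would contradict the claim via the ray property. Take a point $x^*\in K_{\lambda_0}$. Any neighbourhood of $x^*$ meets $M_\lambda$ for $\lambda$ slightly above $\lambda_0$, and by hypothesis only $\ell_1,\ell_2$ appear there. So, by continuity, $d(x^*,\ell_1)=d(x^*,\ell_2)=\lambda_0$ and $d(x^*,\ell)<\lambda_0$ for every other line; in general position, a third line at distance $\lambda_0$ would force a trisector feature, and thus a third face, nearby. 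Thus $x^*\in B(\ell_1,\ell_2)$, and on a neighbourhood of $x^*$ we have $\pi=\max(d(\cdot,\ell_1),d(\cdot,\ell_2))$. On the bisector this equals the common distance $d(\cdot,\ell_1)$. Since $\pi\ge\lambda_0$ everywhere, $x^*$ minimizes the common distance on $B(\ell_1,\ell_2)$ locally. Moreover, $x^*$ lies in the farthest region of the bisector face, and $\pi$ restricted to the bisector is at least $\lambda_0$ globally, so $x^*$ is a global minimum of the common distance over $B(\ell_1,\ell_2)\cap\fvd(L)$. For the stronger statement over the whole bisector, I would use that $\max(d(\cdot,\ell_1),d(\cdot,\ell_2))$ is convex, so a local minimum of it is global. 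Its minimum over $\R^3$ is attained on the bisector, because an unconstrained minimizer off the bisector would be a minimizer of a single $d(\cdot,\ell_i)$, i.e., a point of $\ell_i$ at distance $0$, which is impossible since $\lambda_0>0$ for skew lines. Hence $x^*$ is the global minimum of the common distance on $B(\ell_1,\ell_2)$.

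The main obstacle is justifying rigorously that exactly two labels persist up to the end and that the local picture at $x^*$ is governed only by $\ell_1,\ell_2$. I would handle this by contradiction. If a third line achieved $\pi$ at $x^*$, then $x^*$ would lie on a trisector, so the last moment would be a local minimum or vertex-type event with at least three faces, not the terminal one as defined. Uniqueness of $x^*$ then follows from the strict convexity of the common distance along the bisector near its minimum, which I would take from the general position assumptions.
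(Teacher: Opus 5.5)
Your argument is correct in substance, but it follows a different route from the paper. The paper works directly with the two lines $\ell_i,\ell_j$ of the event. With $\delta=d(\ell_i,\ell_j)$, the triangle inequality gives $\max\{d(p,\ell_i),d(p,\ell_j)\}\ge\delta/2$ for every $p$, with equality at the midpoint $m$ of the common perpendicular, which is a point of $B(\ell_i,\ell_j)$. The paper then states that the terminal event happens at $\lambda=\delta/2$ and concludes from $\pi\ge\delta/2$ that no point has lower priority.

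You instead obtain lastness from the convexity behind Lemma~\ref{lem:shrinking-sphere} together with the ``empty afterwards'' clause of the definition, which pins the event at $\lambda_0=\min\pi$. You then obtain the bisector statement by a local-to-global argument: near the final point, $\pi$ coincides with the convex function $\max\{d(\cdot,\ell_1),d(\cdot,\ell_2)\}$, so the local minimum there is a global one. This supplies the step the paper states without argument, namely why the event priority equals the minimum $\delta/2$ of the two-line maximum. Conversely, the paper's triangle-inequality bound gives the explicit value $\delta/2$, the explicit location $m$, and uniqueness at no cost.

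Three places deserve tightening.
\begin{enumerate}
\item Continuity alone only yields $\max\{d(x^*,\ell_1),d(x^*,\ell_2)\}=\lambda_0$, not that both distances equal $\lambda_0$. Combine your two observations instead: a single active line $\ell_k$ at $x^*$ would make $x^*$ a local minimum of $d(\cdot,\ell_k)$ off $\ell_k$, which is impossible, and every active line owns a face just above $\lambda_0$.
\item Prove that second fact directly rather than by naming the outcome a local-minimum or vertex event, since Theorem~\ref{thm:complete_event_list} is itself proved using this lemma. The active gradients at $x^*$ are distinct unit vectors, because the lines are pairwise disjoint. Each is therefore an exposed point of their convex hull, which gives farthest-region points of that line arbitrarily close to $x^*$ at priorities slightly above $\lambda_0$.
\item Your uniqueness remark is vague, since the bisector is a saddle surface. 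Uniqueness follows from the equality case of $\max\{d(p,\ell_1),d(p,\ell_2)\}\ge\frac12\bigl(d(p,\ell_1)+d(p,\ell_2)\bigr)\ge\delta/2$, and it is not needed for the statement anyway.
\end{enumerate}
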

\begin{proof}
Let the terminal event be defined by two lines $\ell_i$ and $\ell_j$. Let $\delta=d(\ell_i,\ell_j)$. Let $m$ be the midpoint of the segment realizing the distance between the two lines, then $d(m, \ell_i) = d(m, \ell_j) = \delta / 2$.
Moreover, for every point $p\in\mathbb{R}^3$,
$\max\{d(p, \ell_i), d(p, \ell_j)\} \geq \delta / 2$.

The terminal event occurs at $\lambda=\delta/2$. For every $p\in\mathbb{R}^3$,
\(\pi(p) = \max_{\ell \in L} d(p, \ell) \geq \max\{d(p, \ell_i), d(p, \ell_j)\} \geq \delta / 2\).
Thus, no point has priority smaller than the priority of the terminal event. Thus, the terminal event is the last event in the collapse process.
\end{proof}

We are now ready to prove Theorem~\ref{thm:complete_event_list}. After Proposition~\ref{prop:collapse} and Lemma~\ref{lem:stop-final}, only local deletion, adjacency exchange, face split, and terminal event remain during the process. 

\begin{proof}[Proof of \cref{thm:complete_event_list}]
Consider a critical value of $\lambda$ at which the shrinking map $M_\lambda$ changes as a labelled cell complex. We distinguish the possible local changes.

First suppose that a face disappears. If a face $f$ is incident to 4 or more other faces, then the deletion of this face would correspond to a point equidistant to at least 5 sites, contradicting the general position assumption~(3). 
A face incident to 3 (resp.\ 2 or 1) faces can be deleted as part of the deletion (resp.\ local minimum or stop) event.

Next, we show that the following ``touch event'' (as depicted in Figure~\ref{fig:touch}) cannot happen. If it happens, then two related trisectors (the red-blue-green and red-blue-purple trisectors in Figure~\ref{fig:touch}) intersect tangentially, which contradicts the general position assumption. 

\begin{figure}[!htbp]
\centering
\[
\vcenter{\hbox{\resizebox{0.10\textwidth}{!}{\TwoFaceVertex{50}}}}
\quad
\xrightarrow{\hspace{1.5cm}}
\quad
\vcenter{\hbox{\resizebox{0.10\textwidth}{!}{\TwoFaceVertex{0}}}}
\quad
\xrightarrow{\hspace{1.5cm}}
\quad
\vcenter{\hbox{\resizebox{0.10\textwidth}{!}{\TwoFaceVertex{50}}}}
\]
\caption{The ``touch event'': a local piece of $M_\lambda$ is depicted before, at, and after the event. }\label{fig:touch}
\end{figure}

Consequently, when no face disappears and an adjacency changes, the old adjacency is replaced by a new one. The only possibility is a swap event which occurs at a Voronoi vertex.

Now suppose that a face splits. The split involves two boundary edges of the split face, which is exactly a local maximum event. All other local modifications are excluded by Proposition~\ref{prop:collapse}. This completes the proof.
\end{proof}


\section{Intrinsic Classification of Vertex Events}\label{sec:vertex-events}
 
The collapse process distinguishes deletion and swap
vertices using the upper envelope in~$\mathbb{R}^4$.
However, the difference already lies in their intrinsic geometry
in~$\mathbb{R}^3$ as we show in Theorem~\ref{thm:spherical-vertex-classification}.
The criterion is simple: the four tangency points from a vertex to its defining lines form either a spherical triangular hull or a quadrilateral hull, corresponding respectively to a deletion or a swap event.

\subparagraph{Notation.}
Let $x$ be a vertex defined by four lines $L_4=\{\ell_1,\ell_2,\ell_3,\ell_4\}$. Write
$d_i(\cdot) = d(\cdot, \ell_i)$
and let
$g_i = \nabla d_i(x)$.
A normal vector to the bisector $B(\ell_i,\ell_j)$ at $x$ is
$g_i - g_j$.
For $i<j<k$, we locally orient the trisector $T(\ell_i,\ell_j,\ell_k)$ at $x$ by the tangent vector
$t_{ijk} = (g_i - g_j) \times (g_i - g_k)$. 

By locally parameterizing the trisector at $x$, the derivative of the common distance function at $x$ along the oriented trisector $T(\ell_i,\ell_j,\ell_k)$ is
$g_i \cdot t_{ijk}$.
Locally around the vertex $x$, the trisector is split into two local arcs. 
We call the \emph{positive side} (around the vertex $x$) the local arc on which the common distance increases. In other words, if $g_i\cdot t_{ijk}>0$, the positive side points to the direction $t_{ijk}$; otherwise it points to the direction $-t_{ijk}$.

The following test decides, for a trisector incident to $x$, whether its positive side locally belongs to $\fvd(L_4)$. Denote $\eta = \det(g_1, g_2, g_3) - \det(g_1, g_2, g_4) + \det(g_1, g_3, g_4) - \det(g_2, g_3, g_4)$.

\begin{lemma}[FVD-side test]\label{lem:fvd-side-test}
The positive sides of the four trisectors
$T(\ell_1,\ell_2,\ell_3)$, $T(\ell_1,\ell_2,\ell_4)$, $T(\ell_1,\ell_3,\ell_4)$, $T(\ell_2,\ell_3,\ell_4)$
belong to $\fvd(L_4)$ if and only if, respectively,
$\det(g_1, g_2, g_3)\eta > 0$, $-\det(g_1, g_2, g_4)\eta > 0$, $\det(g_1, g_3, g_4)\eta > 0$, $-\det(g_2, g_3, g_4)\eta > 0.$
\end{lemma}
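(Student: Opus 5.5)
The plan is to linearize everything at the vertex $x$ and reduce the statement to a sign computation with triple products. Write $\lambda=\pi(x)$, the common value of the $d_i(x)$. Near $x$ we have $d_i(x+v)=\lambda+g_i\cdot v+O(\|v\|^2)$. Each $g_i$ is the unit vector from the foot of the perpendicular on $\ell_i$ to $x$, so the $d_i$ are smooth at $x$. Under the general position assumptions the four gradients are affinely independent; this is the transversality of the related trisectors, and it makes $\eta$ and all four determinants nonzero. Near $x$, $\fvd(L_4)$ is therefore governed to first order by the upper envelope of the four linear functions $v\mapsto g_i\cdot v$. Because every inequality in the statement is strict, first-order information decides everything, and the quadratic error terms can be absorbed for $v$ small.

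Next I will fix one trisector $T(\ell_i,\ell_j,\ell_k)$ and let $\ell_m$ be the fourth line. The tangent direction at $x$ is $t=t_{ijk}$, and the positive side is the arc leaving $x$ in the direction $\sigma t$, where $\sigma=\operatorname{sign}(g_i\cdot t)$. Points on this arc belong to $\fvd(L_4)$ exactly when $\ell_m$ is strictly closer than $\ell_i,\ell_j,\ell_k$ there. To first order this means $\sigma\,(g_i-g_m)\cdot t>0$, or equivalently $(g_i\cdot t)\,\bigl((g_i-g_m)\cdot t\bigr)>0$.

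The core of the proof is expanding both factors. Using $(g_i-g_j)\times(g_i-g_k)=g_i\times g_j+g_j\times g_k+g_k\times g_i$, one gets $g_i\cdot t_{ijk}=\det(g_i,g_j,g_k)$. This is nonzero, so every trisector incident to $x$ does have a well-defined positive side. For the second factor, $g_m\cdot t_{ijk}$ is a signed sum of three determinants, each containing $g_m$. For $(i,j,k,m)=(1,2,3,4)$ a direct cyclic reordering gives $g_4\cdot t_{123}=\det(g_1,g_2,g_4)-\det(g_1,g_3,g_4)+\det(g_2,g_3,g_4)$. Hence $(g_1-g_4)\cdot t_{123}=\eta$, which yields the first condition $\det(g_1,g_2,g_3)\,\eta>0$.

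For the other three trisectors I will repeat the computation with $m=3,2,1$, keeping the orientation $i<j<k$ from the definition of $t_{ijk}$. The one subtlety is the case $m=1$, where $i=2$ is not the line being compared and $\ell_1$ is the excluded line. I expect $(g_i-g_m)\cdot t_{ijk}$ to equal $\pm\eta$, with sign $+$ for $m=4,2$ and $-$ for $m=3,1$. This is consistent with the alternating signs in the definition of $\eta$: by linearity of $\det$, $\eta=\det(g_2-g_1,\,g_3-g_1,\,g_4-g_1)$, which is the oriented volume of the gradient tetrahedron. That interpretation makes the sign pattern transparent, since permuting indices to move $m$ into the last slot changes the sign of this volume by the sign of the permutation. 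Multiplying by $\det(g_i,g_j,g_k)$ then gives exactly the four stated inequalities.

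The main obstacle is this sign bookkeeping across the four cases. I will handle it through the single identity $\eta=\det(g_2-g_1,g_3-g_1,g_4-g_1)$ rather than four separate expansions.
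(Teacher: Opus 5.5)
Your argument is the same as the paper's. The side test is the sign of $(g_i-g_m)\cdot t_{ijk}$, the positive side is selected by $g_i\cdot t_{ijk}=\det(g_i,g_j,g_k)$, and $(g_1-g_4)\cdot t_{123}=\eta$ follows by direct expansion. Your expansion for $T(\ell_1,\ell_2,\ell_3)$ is correct, and so is your predicted pattern for the cases the paper calls analogous ($+\eta$ for $m=4,2$ and $-\eta$ for $m=3,1$). Two points need repair.

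\textbf{The bookkeeping identity has the wrong sign.} By multilinearity,
\[
\det(g_2-g_1,g_3-g_1,g_4-g_1)=\det(g_2,g_3,g_4)-\det(g_1,g_3,g_4)+\det(g_1,g_2,g_4)-\det(g_1,g_2,g_3)=-\eta .
\]
We also have $(g_1-g_4)\cdot t_{123}=\det(g_1-g_4,g_1-g_2,g_1-g_3)=-\det(g_2-g_1,g_3-g_1,g_4-g_1)$. So your identity would give $-\eta$ in the first case, contradicting your own expansion. If you used it as the only device, all four conditions would come out flipped. Use instead
\[
\eta=\det(g_1-g_2,\,g_1-g_3,\,g_1-g_4).
\]
This equals the $4\times 4$ determinant whose $a$-th column is $g_a$ with a $1$ appended as fourth entry, so it is alternating in the four indices. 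Then
\[
(g_i-g_m)\cdot t_{ijk}=\det(g_i-g_m,\,g_i-g_j,\,g_i-g_k)=\operatorname{sgn}(i,j,k,m)\,\eta ,
\]
where $\operatorname{sgn}(i,j,k,m)$ is the sign of the permutation $(1,2,3,4)\mapsto(i,j,k,m)$. This gives exactly your pattern $+,-,+,-$ for $m=4,3,2,1$.

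\textbf{Non-degeneracy is not implied by transversality.} You claim that affine independence of the $g_i$ (transversality of related trisectors) makes all four determinants $\det(g_i,g_j,g_k)$ nonzero. It only gives $\eta\neq 0$. Both $t_{123}$ and $t_{124}$ are orthogonal to $g_1-g_2$, and they are parallel exactly when $g_1-g_2,\,g_1-g_3,\,g_1-g_4$ are linearly dependent, i.e.\ when $\eta=0$. The condition $\det(g_i,g_j,g_k)\neq 0$ is independent of this. For instance, $g_1=e_1$, $g_2=e_2$, $g_3=-e_1$, $g_4=e_3$ are affinely independent ($\eta=-2$), yet $\det(g_1,g_2,g_3)=0$.

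By Theorem~\ref{thm:great-circle-extremum}, $\det(g_i,g_j,g_k)=0$ means $x$ is a critical point of $\rho$ on $T(\ell_i,\ell_j,\ell_k)$. Excluding this is a separate genericity condition: no Voronoi vertex coincides with a trisector extremum. The paper uses it tacitly when it defines the positive side by the sign of $g_i\cdot t_{ijk}$. State it as an assumption rather than deriving it from transversality. With these two corrections your proof is complete.
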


\begin{proof}
  We prove the first condition; 
  the others are analogous. Consider the trisector $T(\ell_1,\ell_2,\ell_3)$ and the function
$h = d_1 - d_4$ restricted to it. Since $x$ is equidistant from the four lines, the side of $T(\ell_1,\ell_2,\ell_3)$ belongs locally to $\fvd(L_4)$ exactly when $h>0$ on that side. Along the direction $t_{123}$, the derivative of $h$ is 
$(g_1 - g_4) \cdot t_{123}$.
A direct calculation gives
\((g_1 - g_4) \cdot t_{123} = \eta\).
On the other hand, the positive side of $T(\ell_1,\ell_2,\ell_3)$ is the side selected by the sign of
$g_1 \cdot t_{123} = \det(g_1, g_2, g_3)$.
Hence, the positive side lies in $\fvd(L_4)$ if and only if
$\det(g_1, g_2, g_3)\eta > 0$.
\end{proof}

Consider the collapse process of the four lines in $L_4$.
Just above the priority of $x$, the positive sides of trisector arcs of $\fvd(L_4)$ incident to $x$ appear on the shrinking map (by definition of the positive side).
And just below the priority of $x$, the negative sides that belong to $\fvd(L_4)$ start to appear in the shrinking map. Therefore, at a vertex $x$, if three of the four conditions of Lemma~\ref{lem:fvd-side-test} hold, then $x$ is a deletion vertex. If two of them hold, then $x$ is a swap vertex. By Theorem~\ref{thm:complete_event_list}, these are the only two types of vertices. Hence, no other case can happen. This yields a compact algebraic test.

\begin{proposition}\label{prop:vertex-product-test}
The type of a vertex $x$ can be determined by evaluating \[\operatorname{sgn}(\det(g_1, g_2, g_3) \det(g_1, g_2, g_4) \det(g_1, g_3, g_4) \det(g_2, g_3, g_4))\] at $x$. It is $+1$ if and only if $x$ is a swap vertex and $-1$ if and only if $x$ is a deletion vertex.
\end{proposition}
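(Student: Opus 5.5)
Set $D_1=\det(g_1,g_2,g_3)$, $D_2=-\det(g_1,g_2,g_4)$, $D_3=\det(g_1,g_3,g_4)$, $D_4=-\det(g_2,g_3,g_4)$. By Lemma~\ref{lem:fvd-side-test}, the positive side of the $i$-th trisector belongs to $\fvd(L_4)$ exactly when $D_i\eta>0$. By definition of $\eta$ we have $\eta=D_1+D_2+D_3+D_4$. The product in the statement equals $D_1D_2D_3D_4$, because the two minus signs cancel. So the proposition reduces to a sign-counting question. Let $k$ be the number of indices with $D_i\eta>0$. By the discussion after Lemma~\ref{lem:fvd-side-test}, $x$ is a deletion vertex if $k=3$ and a swap vertex if $k=2$. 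Theorem~\ref{thm:complete_event_list} rules out every other value of $k$.

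First I would check that all $D_i$ and $\eta$ are nonzero at a vertex. Vanishing of some $D_i$ means the three gradients $g_i,g_j,g_k$ are linearly dependent. Then $t_{ijk}\perp g_i$, so the common distance along that trisector is critical at $x$. The vanishing of $\eta$ makes $d_1-d_4$ critical along a trisector, so the trisectors would meet tangentially. Both situations are excluded by general position: trisectors meet transversely, and a trisector extremum does not occur at a vertex. I would state this as a consequence of the general position assumptions rather than reprove it. Granting it, the sign of $\prod_i (D_i\eta)$ equals $\operatorname{sgn}(D_1D_2D_3D_4)\cdot\operatorname{sgn}(\eta)^4=\operatorname{sgn}(D_1D_2D_3D_4)$.

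Next, the counting step. If $k=3$, exactly one factor $D_i\eta$ is negative, so the product is negative and the vertex is a deletion vertex. If $k=2$, two factors are negative and the product is positive, which gives a swap vertex. Since these are the only possibilities, both implications become equivalences: sign $-1$ holds iff deletion, and sign $+1$ holds iff swap.

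The counting itself is routine. The delicate point is confirming that $k\in\{2,3\}$ and that no degenerate zero occurs. For $k$, I would rely on Theorem~\ref{thm:complete_event_list}. As an independent sanity check, $\sum_i D_i\eta=\eta^2>0$, so $k\ge1$. The cases $k=1$ or $k=4$ would be an unbounded-below configuration, or one with no lower arcs, contradicting the event classification. The degeneracy discussion in the previous paragraph is the step most likely to need care.
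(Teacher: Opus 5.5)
Your algebra is correct and follows the paper's route. The four conditions of Lemma~\ref{lem:fvd-side-test} are $D_i\eta>0$, their product is $D_1D_2D_3D_4\,\eta^4$, and its sign is the parity of the number $k$ of conditions that hold.

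\textbf{The gap: $k=4$ is not excluded.} The step that fails is the claim that $k\in\{2,3\}$ at every vertex. Your reason for discarding $k=4$ (``no lower arcs, contradicting the event classification'') is wrong. Suppose all four arcs of $\fvd(L_4)$ at $x$ rise. The linearized cells $\{v : g_m\cdot v\ge g_j\cdot v \ \forall j\}$ are simplicial cones spanned by these arcs, and they cover $\R^3$. Hence $\pi(x+v)=\pi(x)+\max_i g_i\cdot v+O(|v|^2)$ with $\max_i g_i\cdot v\ge c|v|$ for some $c>0$, so $x$ is the global minimizer of the convex function $\pi$. This is exactly the case where the collapse ends at a vertex, with four triangles vanishing at once. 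The paper itself counts this as a deletion event in Proposition~\ref{prop:combinatorialrelation}, and Theorem~\ref{thm:complete_event_list} does not exclude it.

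The case genuinely occurs. Take four lines tangent to the unit sphere about $x$ at the vertices of a regular tetrahedron, so $g_1+g_2+g_3+g_4=0$. With $D=\det(g_1,g_2,g_3)$ one gets $\det(g_1,g_2,g_4)=-D$, $\det(g_1,g_3,g_4)=D$ and $\det(g_2,g_3,g_4)=-D$. Hence $\eta=4D$, every $D_i\eta=4D^2>0$, and the product is $D^4>0$. The test returns $+1$ at a vertex that is not a swap vertex.

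So the equivalences can only be proved for vertices other than the global minimizer of $\pi$. Equivalently, one needs $0\notin\operatorname{conv}\{g_1,\dots,g_4\}$, i.e.\ the $q_i$ lie in an open hemisphere, which is the setting of Theorem~\ref{thm:spherical-vertex-classification}. With that hypothesis added, $k\neq 4$ and your parity argument goes through. The paper's short proof skips the same case, so this is a defect of the statement, not a deviation of yours from the paper.

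\textbf{Two smaller points.}
\begin{itemize}
\item \emph{Excluding $k=1$.} This should come from the ray property, not from ``unboundedness.'' The vector $g_m$ lies in the linearized cell of $\ell_m$, and $\max_i g_i\cdot g_m=1>0$, so every cell has a rising generator. If $k=1$, the cell of the line not on the unique rising trisector would have none; hence $k\ge 2$. This sharpens your bound $k\ge 1$ from $\sum_i D_i\eta=\eta^2$. Globally it is Proposition~\ref{prop:collapse}(1), since $k=1$ would create a triangular face at $x$ just below $\pi(x)$.
\item \emph{Nondegeneracy.} Note that $\eta=-\det(g_2-g_1,g_3-g_1,g_4-g_1)$, so $\eta\neq 0$ is exactly general position assumption (2). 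By contrast, $D_i\neq 0$ (no trisector extremum at a vertex) does not follow from assumptions (1)--(4). It must be imposed as an extra genericity condition, which the paper also uses tacitly.
\end{itemize}
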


\begin{proof}
Multiplying the four expressions in Lemma~\ref{lem:fvd-side-test} shows that the sign of the displayed product records the number of sign conditions that hold modulo $2$. By the discussion above, two conditions hold for a swap vertex and three hold for a deletion vertex, giving signs $+1$ and $-1$, respectively.
\end{proof}

Though Proposition~\ref{prop:vertex-product-test} is already a test in 3D, it can be expressed more geometrically. Let $q_i$ be the closest point on $\ell_i$ to $x$, $i\in\{1,2,3,4\}$. Since $x$ is equidistant from the four lines, the four points $q_i$ lie on the sphere centered at $x$ with radius $d(x,\ell_i)$.

\begin{theorem}[Classification of vertex events]\label{thm:spherical-vertex-classification}
The type of a vertex $x$ is determined by the spherical convex hull of
$q_1,q_2,q_3,q_4$ on the sphere centered at $x$. If this hull is a spherical triangle, then $x$ is a deletion vertex. If it is a spherical quadrilateral, then $x$ is a swap vertex.
\end{theorem}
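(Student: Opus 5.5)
The plan is to translate the algebraic test of Proposition~\ref{prop:vertex-product-test} into a statement about the unit vectors from $x$ to the tangency points, and then read off the spherical convex hull from the sign pattern of a linear dependence among these four vectors.

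\emph{Step 1 (gradients versus tangency points).} Let $r=d(x,\ell_i)$ be the common distance, and set $u_i=(q_i-x)/r$, so that $u_i\in\mathbb S^2$ represents $q_i$ on the sphere centered at $x$. The gradient of the distance to a line is the unit vector pointing from the closest point to $x$. Hence $g_i=(x-q_i)/r=-u_i$. This gives $\det(g_i,g_j,g_k)=-\det(u_i,u_j,u_k)$. The product in Proposition~\ref{prop:vertex-product-test} has four factors, so its sign equals the sign of $D=\det(u_1,u_2,u_3)\det(u_1,u_2,u_4)\det(u_1,u_3,u_4)\det(u_2,u_3,u_4)$. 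All four determinants are nonzero. Indeed, a vanishing determinant would place three of the $q_i$ together with $x$ on a plane, so three tangency points would lie on a great circle. I expect this to be excluded by the general position assumptions; if needed, I would add it explicitly.

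\emph{Step 2 (linear dependence).} Four vectors in $\R^3$ satisfy a relation $\sum_i c_iu_i=0$. By Cramer's rule we may take
\[
c_1=\det(u_2,u_3,u_4),\quad c_2=-\det(u_1,u_3,u_4),\quad c_3=\det(u_1,u_2,u_4),\quad c_4=-\det(u_1,u_2,u_3).
\]
Then $c_1c_2c_3c_4=D$, and all $c_i$ are nonzero. The sign pattern of $(c_i)$ determines the configuration on the sphere:
\begin{itemize}
\item[(a)] Pattern $3{:}1$, say $c_4$ alone. Then $u_4$ is a positive combination of $u_1,u_2,u_3$. So $u_4$ lies in their cone, i.e.\ inside the spherical triangle $u_1u_2u_3$, and the hull is a triangle. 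In this case $D<0$.
\item[(b)] Pattern $2{:}2$, say $c_1,c_2>0>c_3,c_4$. Then the positive combinations $c_1u_1+c_2u_2$ and $-c_3u_3-c_4u_4$ coincide. So the arcs $u_1u_2$ and $u_3u_4$ cross, the four points are in convex position in an open hemisphere, and the hull is a quadrilateral. In this case $D>0$.
\item[(c)] Pattern $4{:}0$. Then $0$ lies in the interior of the convex hull of the $u_i$. No hemisphere contains the points, so the spherical hull is not a proper triangle or quadrilateral.
\end{itemize}
Combining (a) and (b) with Proposition~\ref{prop:vertex-product-test} gives the theorem, provided (c) is excluded.

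\emph{Step 3 (main obstacle: excluding case (c)).} Case (c) means $0=\sum c_i g_i$ with all $c_i>0$, i.e.\ $0$ lies in the convex hull of the gradients of the active convex functions $d_i$. Then $x$ is a minimizer of the convex function $\max_i d_i$ over $L_4$, hence the global minimum of the collapse for $L_4$.

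At such a point, four faces of the shrinking map of $L_4$ would vanish simultaneously at the last priority. This is not among the five event types of Theorem~\ref{thm:complete_event_list}, applied to $L_4$. Equivalently, none of the four conditions of Lemma~\ref{lem:fvd-side-test} holds on the positive side, since no positive side lies in $\fvd(L_4)$. This contradicts the count of three or two established before Proposition~\ref{prop:vertex-product-test}.

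The care needed here is to argue that the vertex $x$ of $\fvd(L)$ is also a vertex of $\fvd(L_4)$ whose local event type agrees with its event type in $\fvd(L)$, so that the earlier counting applies. Since only $\ell_1,\dots,\ell_4$ are farthest near $x$, the local structure of the two diagrams around $x$ coincides, and I would state this explicitly.
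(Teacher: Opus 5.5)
Your Steps 1 and 2 are correct. They are essentially the paper's argument run in the opposite direction. The paper starts from the hull and writes $g_4=ag_1+bg_2+cg_3$ (triangle) or $ag_1+cg_3=bg_2+dg_4$ (quadrilateral) with positive coefficients. It then reads off the sign of the determinant product and applies Proposition~\ref{prop:vertex-product-test}. Your Cramer dependency does the same computation and also yields the full trichotomy (a)/(b)/(c). The problem is Step 3, which you present as necessary but which is both unnecessary and false. It is unnecessary because, with all $c_i\neq 0$, the three sign patterns are exhaustive and give mutually exclusive hull shapes. So the hypothesis that the hull is a triangle (resp.\ quadrilateral) already forces (a) (resp.\ (b)). The theorem never asserts that every vertex has such a hull.

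It is false because case (c) occurs. Take four lines tangent to the unit sphere at the vertices of a regular tetrahedron, with generic tangent directions. The center $x$ satisfies $\sum_i g_i=0$ and is a vertex of $\fvd(L_4)$. Case (c) is precisely the case where $x$ is the global minimizer of $\pi$, i.e., the last event. The paper treats this as the final deletion event with four triangular faces (see the proof of Proposition~\ref{prop:combinatorialrelation}), so ``not among the five event types'' is not a valid objection.

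Your count is also inverted. At the global minimizer every incident edge of $\fvd(L_4)$ rises in priority, so all four positive sides lie in $\fvd(L_4)$. Concretely, with your $c_i$ one has $\eta=\sum_j c_j$. The condition of Lemma~\ref{lem:fvd-side-test} for the trisector not involving $\ell_m$ is then $c_m\sum_j c_j>0$. This gives three conditions in case (a): since the $u_i$ are unit vectors, the triangle inequality makes the majority sign dominate $\sum_j c_j$. It gives two in case (b), using $\eta\neq 0$, and four in case (c).

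So there is no contradiction. Rather, the ``two or three'' count behind Proposition~\ref{prop:vertex-product-test} is valid only at non-terminal vertices, and so is its implication from sign $+1$ to a swap vertex. In case (b) you should therefore note that the $q_i$ lie in an open hemisphere, so $0\notin\mathrm{conv}\{g_1,\dots,g_4\}$ and $x$ is not the terminal vertex. Alternatively, use the direct count of two conditions. Only then conclude that $x$ is a swap vertex.
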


\begin{proof}
It is equivalent to consider the unit directions $g_i$'s on the unit sphere, since $q_i=x-rg_i$ for $i\in \{1,2,3,4\}$, where $r=d(x,\ell_i)$.
Assume first that the spherical hull is a triangle and~$g_1, g_2, g_3$ are vertices of the hull. Then the ray spanned by $g_4$ lies in the positive cone generated by $g_1,g_2,g_3$. Thus, for some $a,b,c>0$, we may write $g_4 = ag_1+bg_2+cg_3$.
Let $D=\det(g_1, g_2, g_3)$. Then
$\det(g_1, g_2, g_4) = cD$, $\det(g_1, g_3, g_4) = -bD$, $\det(g_2, g_3, g_4) = aD$.
Therefore,
\[\operatorname{sign}\left(\det(g_1, g_2, g_3)\det(g_1, g_2, g_4)\det(g_1, g_3, g_4)\det(g_2, g_3, g_4)\right) = \operatorname{sign}(-abcD^4) = -1.\]
By Proposition~\ref{prop:vertex-product-test}, $x$ is a deletion vertex. 

If the spherical hull is a quadrilateral, then after relabelling there are some $a,b,c,d>0$ such that $ag_1+cg_3=bg_2+dg_4$. An analogous calculation shows that $x$ is a swap vertex.  
\end{proof}

The next section gives the analogous intrinsic description for local extremum events.

\section{Local Extrema on Trisectors}\label{sec:trisector-extrema}

Every edge of $\fvd(L)$ is contained in the trisector of three lines. Besides
vertex events and the terminal event, the collapse process can change only at
local extremum events, which correspond to local extrema of the common distance
function along a trisector, see \cref{fig:trisectorfigure}.

\begin{figure}[h]
	\centering
	\includegraphics[width = 0.7\textwidth, trim = 120 40 0 280, clip]{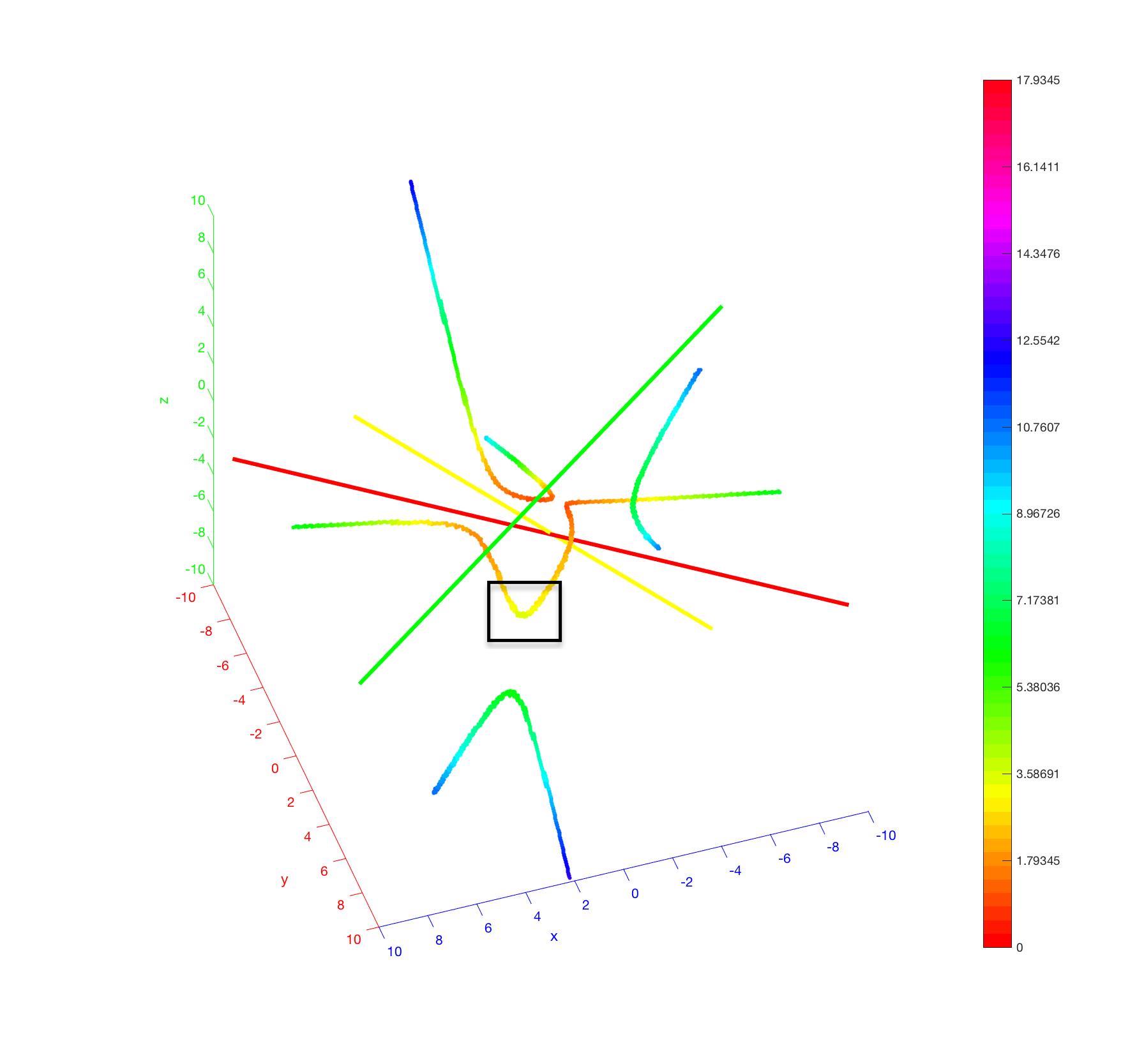}
	\caption{The trisector of three lines (red, green, yellow). The colors along the trisector encode the distance to the lines. In this example, the distance function along the middle branch admits a local maximum, as the color coding changes from orange to yellow to orange. The position of the local maximum is highlighted in the box. 
	}\label{fig:trisectorfigure}
\end{figure}

We prove three facts about these local extrema: tight bounds, a degree 12 algebraic computation, and a simple three-dimensional geometric interpretation.

Let
$T = T(\ell_1, \ell_2, \ell_3)$
be the trisector of three lines in $\mathbb{R}^3$. For $p \in T$, define
$\rho(p) = d(p, \ell_1) = d(p, \ell_2) = d(p, \ell_3)$.
By definition, local minima (resp.\ maxima) of $\rho$ correspond to local minimum (resp.\ maximum) events of the collapse process. 

\begin{theorem}\label{thm:trisector-extrema-count}
For three lines in general position in $\mathbb{R}^3$, the function $\rho$ on their trisector has at most $4$ local maxima and at most $8$ local minima. Both bounds are tight.
\end{theorem}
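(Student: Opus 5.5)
The plan is to turn the critical points of $\rho$ on $T$ into the common zeros of a small polynomial system, bound their number with B\'ezout, and then split the total between maxima and minima using the behaviour of $\rho$ at the ends of the branches of $T$.

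\textbf{Algebraic system.} Write $f_i(p)=d(p,\ell_i)^2$. Each $f_i$ is a quadratic polynomial in $p$, and its gradient $\nabla f_i(p)=2(p-q_i(p))$ is affine-linear in $p$, where $q_i(p)$ is the foot of the perpendicular from $p$ to $\ell_i$. The trisector is the intersection of the two quadrics $f_1-f_2=0$ and $f_1-f_3=0$; by the general position assumption it is a smooth space curve of degree $4$.

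\textbf{Critical-point condition.} A point $p\in T$ is a critical point of $\rho^2=f_1|_T$ exactly when $\nabla f_1$ lies in the span of $\nabla(f_1-f_2)$ and $\nabla(f_1-f_3)$. Equivalently,
\[
\det\bigl(\nabla f_1,\nabla f_2,\nabla f_3\bigr)=\det\bigl(p-q_1,\,p-q_2,\,p-q_3\bigr)=0 .
\]
This is a cubic equation in $p$. Its geometric meaning is that $p,q_1,q_2,q_3$ are coplanar, so the three tangency points lie on a great circle of the sphere centered at $p$; this is the geometric interpretation announced for local extrema.

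\textbf{Total count.} The critical points are the intersection of two quadrics with a cubic. B\'ezout gives at most $2\cdot2\cdot3=12$ isolated solutions. Degenerate (non-isolated) intersections should be excluded by general position, in the same way assumption (2) excludes coplanar tangency for four lines. Hence the number of local maxima plus local minima is at most $12$.

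\textbf{Splitting into maxima and minima.} I would use the structure of the trisector of three pairwise skew lines, following \cite{Everett2009}. Every connected component of $T$ is an unbounded curve homeomorphic to $\R$. Along each component $\rho\to\infty$ at both ends; this follows from the convexity argument in Lemma~\ref{lem:shrinking-sphere}. Since $\rho$ is generic (a Morse function) on each component, local maxima and minima alternate there. So each component carries exactly one more minimum than maxima, and
\[
\#\min=\#\max+c,
\]
where $c$ is the number of components. The count of real solutions has the same parity as $12$ once complex-conjugate pairs are removed. Combining the two relations gives $2\#\max+c\le 12$.

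\textbf{Main obstacle.} To reach $\#\max\le4$ and $\#\min\le8$, I need either $c\ge4$ or a proof that some of the $12$ B\'ezout solutions are never real, lie at infinity, or are non-isolated. The most promising route seems to be checking the intersection of the three hypersurfaces on the plane at infinity. The asymptotic directions of $T$ should force solutions of the projective system at infinity, and these would reduce the affine count for configurations with fewer components. If that fails, I would instead eliminate variables and obtain the degree-$12$ univariate polynomial along a rational parametrization of each branch. The max/min split would then be controlled by a sign-change (Descartes/Sturm-type) argument, using the alternation of extrema between consecutive roots.

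\textbf{Tightness.} I would give an explicit configuration of three lines, for example one with a threefold rotational symmetry about an axis, perturbed slightly. In it I would exhibit $12$ real critical points, $4$ maxima and $8$ minima, certified numerically by sign changes of $\rho'$ along the branches.
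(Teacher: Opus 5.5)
Your B\'ezout step is sound and matches the paper's count. The paper works on the bisector $B(\ell_1,\ell_2)$, projected homeomorphically to the $xy$-plane. There the trisector is a bidegree-$(2,2)$ curve $P=0$ and the Lagrange condition is a bidegree-$(3,3)$ curve $Q=0$, giving $2\cdot 3+3\cdot 2=12$. Your two quadrics cut by the cubic $\det(\nabla f_1,\nabla f_2,\nabla f_3)$ give the same $12$.

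The gap is the split into maxima and minima, which is exactly the step you leave open as the ``main obstacle''. With $N\le 12$ extrema and $c$ branches, alternation gives $\#\max=(N-c)/2$ and $\#\min=(N+c)/2$. So the bound on maxima needs $c\ge 4$, and the bound on minima needs $c\le 4$. Your target ``$c\ge 4$'' alone would not suffice: with $c=6$, nine minima would be allowed.

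The missing input is the structural theorem of \cite{Everett2009}, the same source you already use for the shape of the components. For three pairwise skew lines not all parallel to a common plane, the trisector consists of exactly four unbounded branches. This is what the paper uses. With $c=4$ you get $2\#\max+4\le 12$, hence $\#\max\le 4$ and $\#\min=\#\max+4\le 8$. Your fallback routes (solutions at infinity, Sturm/Descartes arguments, parity of real roots) are not carried out and are not needed. Also, alternation of extrema only needs finitely many critical points, not a Morse assumption.

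Tightness is likewise only a plan. The paper gives an explicit witness in the normal form of \cite{Everett2009}, $(a,b_3,c_3,d_3,e_3)=(1,-9,2,8,-18)$, with $8$ minima and $4$ maxima. Your suggested threefold-symmetric configuration is a poor candidate. Suppose a rotation of order $3$ permutes the three lines.
\begin{itemize}
\item The rotation axis then lies in the trisector. Since the trisector is a smooth curve, the axis is one whole branch, and on it $\rho^2$ is a strictly convex quadratic with a single minimum.
\item The other three branches are permuted cyclically. A branch mapped to itself would be fixed pointwise, because an order-$3$ homeomorphism of $\R$ is the identity, and so it would lie on the axis.
\item Hence these three branches carry equal numbers: $m+1$ minima and $m$ maxima each.
\end{itemize}
This forces $\#\min=3m+4$ and $\#\max=3m$ with $6m+4\le 12$, so at most $7$ minima and $3$ maxima. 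A slight perturbation preserves nondegenerate critical points, so you would still have to explain where two additional extrema come from.
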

\begin{proof}
By the standard normal form for three skew lines from~\cite{Everett2009}, after a suitable choice of coordinates and scaling, one may assume that the three lines are parameterized as follows,
$\ell_1 : (0,0,1) + (1,a,0)t$, $\ell_2 : (0,0,-1) + (1,-a,0)t$,
and
$\ell_3 : (b_3,c_3,0) + (d_3,e_3,1)t$.
The bisector $B(\ell_1,\ell_2)$ is the set
$\{(x,y,z)\in\mathbb{R}^3 \mid z = -\frac{a}{1+a^2}xy\}$. Consider the projection
\[\Pi : B(\ell_1,\ell_2) \to \mathbb{R}^2, \quad (x,y,z) \mapsto (x,y),\]
which is a homeomorphism~\cite{Everett2009}. We aim to study the projected trisector $\Pi(T)$. 

Substituting $z = -\frac{a}{1+a^2}xy$
into the equation $d(p,\ell_1)=d(p,\ell_3)$ gives a bivariate polynomial equation
$P(x,y) = A(x)y^2 + B(x)y + C(x) = 0$,
where $A,B,C$ are quadratic polynomials in $x$ whose coefficients depend on the parameters $a,b_3,c_3,d_3,e_3$. Thus,
$\Pi(T) = \{(x,y) \in \mathbb{R}^2 \mid P(x,y) = 0\}$.

Let $d^2(x,y)$ be the squared distance from $\Pi^{-1}(x,y)$ to $\ell_1$. Since $\rho>0$, the local extrema of $\rho$ are exactly the local extrema of $d^2$ on the curve $P=0$. At a constrained critical point, the gradients of $d^2$ and $P$ are linearly dependent. Equivalently,
\[Q(x,y) = \frac{\partial d^2}{\partial x}\frac{\partial P}{\partial y} - \frac{\partial d^2}{\partial y}\frac{\partial P}{\partial x} = 0.\]

The polynomials $P(x,y)$ and $d^2(x,y)$ both have bi-degree $(2,2)$. Hence, $Q(x,y)$ has bi-degree $(3,3)$. B\'ezout's theorem shows that the system of equations $P=0$ and $Q=0$ has at most $2\times 3+3\times 2=12$ real roots. 

The trisector of three lines in general position consists of four unbounded branches~\cite{Everett2009}. Each branch has $\rho\to\infty$ at both ends. Since $\rho$ is continuous, local extrema alternate and each branch has exactly one more local minimum than local maximum. Since the total number of critical points is at most $12$, the four branches together have at most $4$ local maxima and at most $8$ local minima.
Three lines with the parameters 
$(a, b_3, c_3, d_3, e_3) = (1, -9, 2, 8, -18)$ have exactly 12 local extrema, see Figure~\ref{fig:trisector-12-extrema}. Hence, the bounds are tight.
\end{proof}

\begin{figure}[!h]
	\centering
	\includegraphics[width=0.34\textwidth]{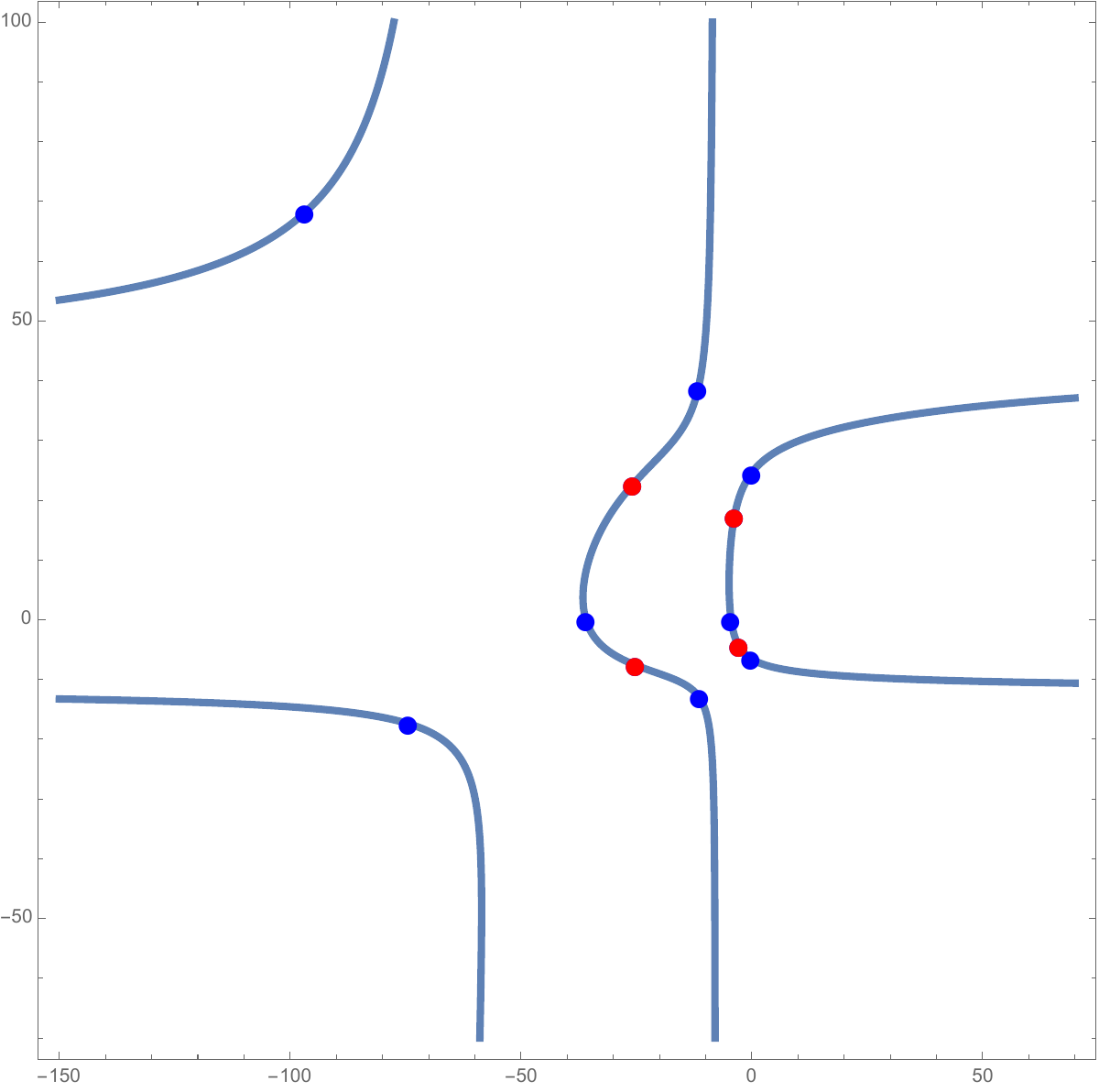}
	\caption{This figure shows the trisector of three lines $\ell_1, \ell_2, \ell_3$, with the parameters $(a, b_3, c_3, d_3, e_3) = (1, -9, 2, 8, -18)$, on the projected bisector $\Pi(B(\ell_1, \ell_2))$. The local minima are shown in \textcolor{blue}{blue} and local maxima are in \textcolor{red}{red}. There are 8 local minima and 4 local maxima.}
	\label{fig:trisector-12-extrema}
\end{figure}

\begin{proposition}\label{prop:smallest-touching-sphere}
Given three lines in general position in $\mathbb{R}^3$, the smallest sphere tangent to the three lines can be computed by solving a polynomial of degree $12$.
\end{proposition}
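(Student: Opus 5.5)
The plan is to reduce the problem to finding the global minimum of $\rho$ on the trisector $T=T(\ell_1,\ell_2,\ell_3)$, and then to reuse the algebraic setup from the proof of Theorem~\ref{thm:trisector-extrema-count}.

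\textbf{Reduction to the trisector.} A sphere with center $p$ and radius $r$ is tangent to a line $\ell$ exactly when $d(p,\ell)=r$. Hence a sphere tangent to $\ell_1,\ell_2,\ell_3$ has its center $p$ on $T$, and its radius is $\rho(p)$. Conversely, every point of $T$ is the center of such a sphere. The smallest tangent sphere therefore corresponds to the global minimum of $\rho$ on $T$. This minimum exists: by~\cite{Everett2009}, $T$ consists of four unbounded branches, $\rho\to\infty$ at both ends of each branch, and $\rho$ is continuous. The minimizer is then a local minimum of $\rho$ on one of the branches, and so it is among the constrained critical points.

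\textbf{Algebraic elimination.} I would use the normal form of~\cite{Everett2009} and the homeomorphic projection $\Pi$ of $B(\ell_1,\ell_2)$, as in the proof of Theorem~\ref{thm:trisector-extrema-count}. The critical points are then the common real solutions of $P(x,y)=0$ and $Q(x,y)=0$, where $P$ has bi-degree $(2,2)$ and $Q$ has bi-degree $(3,3)$.

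To obtain one univariate polynomial, I would eliminate $y$ by the Sylvester resultant $R(x)=\operatorname{Res}_y(P,Q)$. The resultant has size $5\times 5$: three rows from the coefficients of $Q$ (degree at most $3$ in $x$) and two rows from those of $P$ (degree at most $2$ in $x$). A naive count of its degree in $x$ gives $3\cdot 2\cdot 2 + 2\cdot 3=18$. The sharp count comes from the mixed-volume (BKK) bound on bi-degrees, $2\cdot 3+2\cdot 3=12$, which matches the B\'ezout count already used in Theorem~\ref{thm:trisector-extrema-count}. So after discarding spurious factors the elimination polynomial has degree $12$. Alternatively, after a generic linear change of coordinates, a multihomogeneous elimination on $\mathbb{P}^1\times\mathbb{P}^1$ directly yields a polynomial of degree equal to the intersection number, namely $12$.

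\textbf{Recovering the sphere.} For each real root $x_0$, recover $y_0$ from $P(x_0,y)=Q(x_0,y)=0$ via their gcd, which is linear under general position. Then set $p=\Pi^{-1}(x_0,y_0)$, using $z=-\frac{a}{1+a^2}xy$, and evaluate $\rho(p)$. Among the at most $12$ candidates, which by Theorem~\ref{thm:trisector-extrema-count} are at most $8$ local minima and at most $4$ local maxima, choose the one with the smallest $\rho$. The sign of the second derivative of $\rho$ along $T$ can be used to discard the maxima, but it is not required.

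\textbf{Main obstacle.} The hardest step is justifying that the eliminant really has degree $12$ rather than $18$. I would first prove that the leading coefficients of $P$ and $Q$ at infinity do not vanish at common points, showing this by a direct computation in the normal form. Then I would argue that the resultant's excess degree $18-12$ comes from common zeros at infinity of the bi-projective compactification, so that factor can be divided out. If that argument becomes messy, the fallback is to take the degree-$12$ bound from B\'ezout on $\mathbb{P}^1\times\mathbb{P}^1$ and compute the eliminating polynomial as the characteristic polynomial of multiplication by $x$ on the quotient ring $\mathbb{C}[x,y]/(P,Q)$, whose dimension is at most $12$.
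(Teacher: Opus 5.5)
Your approach matches the paper's. You reduce to the global minimum of $\rho$ on $T$, which exists because $\rho\to\infty$ at the ends of the four unbounded branches. You then obtain the candidate centres from a resultant of $P$ and $Q$ and back-substitute. The paper eliminates $x$ and you eliminate $y$, which is symmetric because $P$ and $Q$ have the same degree in $x$ as in $y$.

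The flaw is in your degree bookkeeping: the obstacle you identify does not exist. In the Sylvester matrix of $\operatorname{Res}_y(P,Q)$ the roles are the reverse of what you wrote:
\begin{itemize}
\item there are $\deg_y Q=3$ rows carrying the coefficients $A,B,C$ of $P$, each of degree at most $2$ in $x$;
\item there are $\deg_y P=2$ rows carrying the coefficients of $Q$, each of degree at most $3$ in $x$.
\end{itemize}
So the determinant has degree at most $3\cdot 2+2\cdot 3=12$ in $x$ directly. This naive row count already equals the B\'ezout number on $\mathbb{P}^1\times\mathbb{P}^1$, and the figure $18$ does not come from any valid count. There is no ``excess degree $18-12$'' to explain or divide out.

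The mechanism you proposed for that excess is also backwards. Common zeros at $x=\infty$ \emph{lower} the degree of $R$. Common zeros at $y=\infty$ occur when $A$ and the $y^3$-coefficient of $Q$ share a root $x_0$. They produce spurious roots of $R$ inside the degree-$12$ budget, and you discard them at back-substitution because $P(x_0,y)$ and $Q(x_0,y)$ then have no common finite root.

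Replace your ``main obstacle'' paragraph with this one-line count. The quotient-ring fallback is valid but unnecessary. Your alternative of a generic linear change of the coordinates $(x,y)$ would not help, because it destroys the bi-degree structure on which the $\mathbb{P}^1\times\mathbb{P}^1$ count of $12$ relies. With these corrections your proof is complete and coincides with the paper's.
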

\begin{proof}
Normalize the three lines as in the proof of Theorem~\ref{thm:trisector-extrema-count}. Form the projected trisector equation $P(x,y)=0$ and the equation $Q(x,y)=0$. By the proof of Theorem~\ref{thm:trisector-extrema-count}, it suffices to find the real roots of the system $P=0 \land Q=0$. Eliminating $x$ from the system
by taking the resultant $\operatorname{Res}_x(P,Q)$ gives a univariate polynomial in $y$ of degree $12$. Solve this univariate polynomial in $y$ of degree $12$, and back substitute to obtain the $x$-coordinates. Then recover the 3D coordinates by $z = -\frac{a}{1+a^2}x y$.

This yields the centers $p=(x_0,y_0,z_0)$ of spheres tangent to the three lines. Every sphere tangent to the three lines has its center on the trisector, and $\rho$ tends to infinity on all unbounded ends of the trisector. Hence, the global minimum tangent sphere exists, and is attained at one of the local minima of $\rho$.
\end{proof}

Similar to the previous section on vertices, we also give an intrinsic 3D interpretation of the local extrema, in contrast to viewing them as critical points of the upper envelope in $\mathbb{R}^4$.

\begin{theorem}\label{thm:great-circle-extremum}
Let $x$ be a local extremum of $\rho$ on $T(\ell_1,\ell_2,\ell_3)$. Then the sphere centered at~$x$ with radius $\rho(x)$ touches $\ell_1,\ell_2,\ell_3$ in three points that lie on a common great circle.
\end{theorem}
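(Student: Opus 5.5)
The plan is to use the Lagrange condition for a constrained extremum, together with the same gradient notation as in Section~\ref{sec:vertex-events}. Write $d_i(\cdot)=d(\cdot,\ell_i)$ and $g_i=\nabla d_i(x)$ for $i=1,2,3$. Since $x$ is not on any $\ell_i$, each $d_i$ is smooth near $x$. Its gradient $g_i$ is the unit vector from the foot point $q_i\in\ell_i$ toward $x$, so $q_i = x-\rho(x)\,g_i$, exactly as in the proof of Theorem~\ref{thm:spherical-vertex-classification}. Therefore the three tangency points lie on a common great circle of the sphere centered at $x$ if and only if the unit vectors $g_1,g_2,g_3$ are linearly dependent, that is, $\det(g_1,g_2,g_3)=0$. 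This reduces the theorem to showing that this determinant vanishes at a local extremum.

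Near $x$, the trisector is the transversal intersection of the bisectors $B(\ell_1,\ell_2)$ and $B(\ell_1,\ell_3)$. General position gives this transversality, since trisectors are smooth curves of dimension $1$. The vectors $g_1-g_2$ and $g_1-g_3$ are normals to these two surfaces at $x$, so they are linearly independent. The tangent line of $T$ at $x$ is then spanned by $t_{123}=(g_1-g_2)\times(g_1-g_3)$. On $T$ we have $\rho=d_1$, so the derivative of $\rho$ along $T$ at $x$ is $g_1\cdot t_{123}$, as already noted in Section~\ref{sec:vertex-events}.

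Expanding the triple product gives
\[
g_1\cdot\bigl((g_1-g_2)\times(g_1-g_3)\bigr)=g_1\cdot(g_2\times g_3)=\det(g_1,g_2,g_3),
\]
because every term containing $g_1$ twice vanishes. At a local extremum of the smooth function $\rho|_T$ this derivative must be zero. Hence $\det(g_1,g_2,g_3)=0$, so $g_1,g_2,g_3$ lie in a common plane through the origin. The points $q_i=x-\rho(x)g_i$ then lie in a plane through $x$, and that plane cuts the sphere in a great circle.

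The main obstacle is justifying smoothness: we need $T$ to be a smooth regular curve at $x$, with $t_{123}\neq 0$, so that the first-order criterion applies. I would handle this with the general position assumption, which gives that $k$-sectors have dimension $4-k$ and are smooth, together with the observation that $t_{123}=0$ would make $g_1,g_2,g_3$ collinear in the affine sense. The vectors $g_i$ are distinct unit vectors, because the lines are pairwise skew, and three distinct points on the unit sphere cannot be affinely collinear. So $t_{123}\neq 0$ automatically. This turns the transversality requirement into a small lemma rather than an extra assumption.
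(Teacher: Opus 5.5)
Your proposal is correct and follows the paper's proof essentially step for step. Like the paper, you reduce the claim to $\det(g_1,g_2,g_3)=0$ via $q_i=x-\rho(x)g_i$, and you obtain this from the vanishing of $g_1\cdot t_{123}$ at the extremum. Your extra argument that $t_{123}\neq 0$ is a welcome addition that the paper leaves implicit: the $g_i$ are distinct by skewness, and no three distinct points of the unit sphere are collinear, so the trisector is a regular curve at $x$.
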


\begin{proof}
Recall the definition of $q_i$ and $g_i$ introduced in Section~\ref{sec:vertex-events}: $q_i$ is the closest point on $\ell_i$ from $x$, and $g_i=\frac{x-q_i}{d(x,q_i)}$. To show that $q_i$'s lie on a common great circle of the sphere centered at $x$ with radius $\rho(x)$, it is equivalent to show that the $g_i$'s that lie on the unit sphere also lie on a great circle. By the same calculation as in Section~\ref{sec:vertex-events}, since $x$ is a local extremum of $\rho$, the derivative of the distance function vanishes at $x$. Hence, $g_1 \cdot t = 0$, where $t= (g_1 - g_2) \times (g_1 - g_3)$ is the tangent vector to the trisector at $x$. Expanding the expression gives $\det(g_1, g_2, g_3) = 0$, which holds if and only if the $g_i$'s lie on a common plane through the origin. Hence, they also lie on a great circle. This completes the proof. 
\end{proof}

It is worth noting that for line segments the trisector may contain a closed
branch, which is illustrated in  Figure~\ref{fig:bounded-trisector-segments}.
For such a trisector branch, it is obvious that the distance function must have local maximum and a local minimum. 

\begin{figure}
	\centering
	\includegraphics[width = 0.7\textwidth, trim = 0 0 0 0, clip]{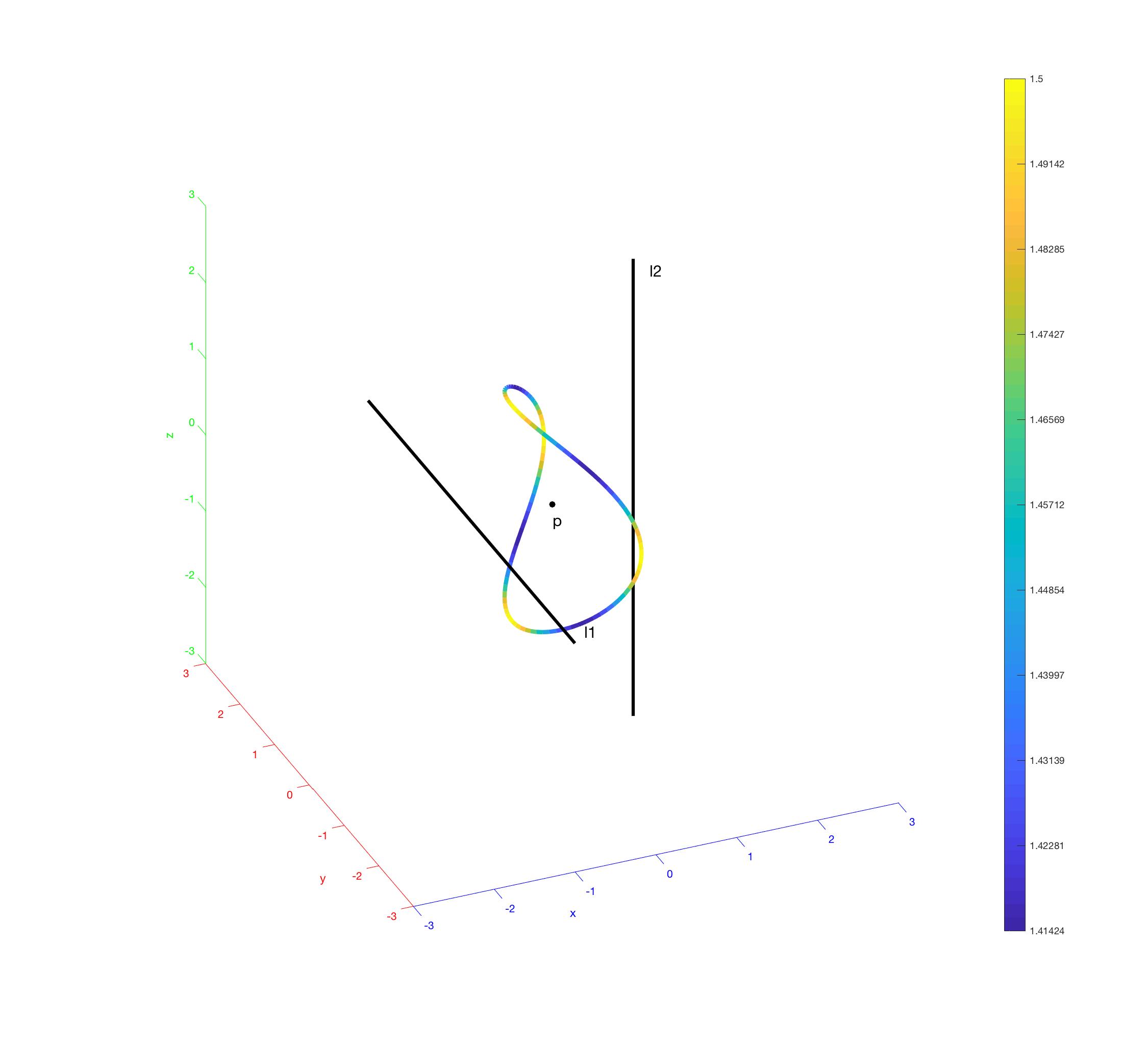}
	\caption{The trisector of 2 lines $l_1: x = -1 \wedge z = 0$ and $l_2: x = 1 \wedge y = 0$ and point $p = (0,0,0)$. 
	}\label{fig:bounded-trisector-segments}
\end{figure}


\section{Algorithmic Outlook and Related Aspects}\label{sec:outlook}

\subsection{Constructing the $\Gamma$-map from the Gaussian Map}\label{sec:gamma}

We show the construction of the $\Gamma$-map of $\fvd(L)$ from its Gaussian map. Recall that $\gm(\fvd(L))$ corresponds to the limit of $\Gamma(\fvd(L))$ as the radius of $\Gamma$ goes to infinity. 
However, $\gm(\fvd(L))$ has some additional vertices as a result of taking the limit to infinity. 
These are called ``vertices of anomaly''~\cite{Barequet2023} and correspond to directions that are orthogonal to each pair of lines. Let $\ell_1, \ell_2 \in L$ be two lines and $\overrightarrow{v}$ be a direction orthogonal to both.
The Gaussian map illustrates 
a vertex of anomaly at direction $\overrightarrow{v}$, even though only the farthest region of one line is unbounded in direction $\overrightarrow{v}$, we assume it to be $\ell_1$ w.l.o.g. Symmetrically only the region of $\ell_2$ is unbounded in the opposite direction $-\overrightarrow{v}$. 
Thus in order to capture the correct topology of the $\Gamma$-map, we need to
remove these vertices of anomaly; 
Figure~\ref{fig:verticesofanomaly} illustrates how this can be done locally around $\overrightarrow{v}$ and $-\overrightarrow{v}$. 

\begin{figure}[h]
	\centering
	\includegraphics[width=\linewidth]{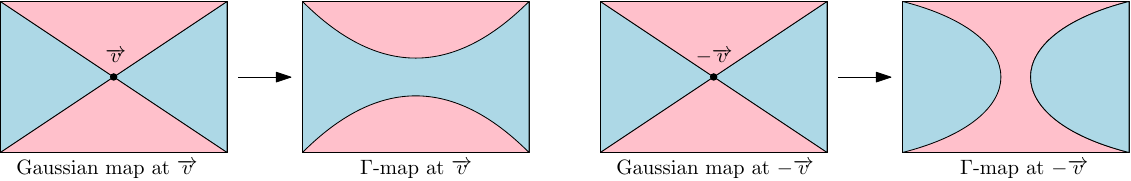}
	\caption{The local Gaussian map of $\fvd(L)$ around a pair of vertices of anomaly and the modifications to obtain $\Gamma(\fvd(L))$. The face of $\ell_1$ (resp.\ $\ell_2$) is shown in blue (resp.\ red).}\label{fig:verticesofanomaly}
\end{figure}

Given such a pair of vertices defined by lines $\ell_1$ and $\ell_2$, we can resolve them in constant time. First, we determine which line is farther in direction $\overrightarrow{v}$. We remove vertex $\overrightarrow{v}$ from the Gaussian map, merge its two incident faces of~$\ell_1$ into one, and reconnect the local boundary accordingly. Similarly for $-\overrightarrow{v}$. 
The Gaussian map has $2{n \choose 2}$ vertices of anomaly, hence, we can resolve them in $O(n^2)$ time and obtain the $\Gamma$-map.

\subsection{Consequences for the Farthest Voronoi Diagram}

Let $n_d, n_s, n_m, n_M$ denote the number of deletion, swap, local minimum, and local maximum events, respectively, during the collapse process for $\fvd(L)$.

\begin{proposition}\label{prop:combinatorialrelation}
The following numerical relation holds for the number of events: 
\(n_d + n_m - n_M = n^2 - n - 2\cdot \mathds{1}_d,\)
where $\mathds{1}_d$ is 0 if the collapse process ends with a deletion event and 1 otherwise.
\end{proposition}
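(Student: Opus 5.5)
Track the number of faces $F(\lambda)$ of the shrinking map $M_\lambda$ over the whole collapse, from $\lambda=\infty$ down to the end of the process. The count only changes at events, so the strategy is to compute $F$ at the start, the net change contributed by each event type, and $F$ just before the end.

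\textbf{Initial count.} For large $\lambda$, $M_\lambda$ is topologically equivalent to the $\Gamma$-map of $\fvd(L)$. Its faces correspond bijectively to the 3D cells of $\fvd(L)$: every cell is unbounded by Corollary~\ref{cor:unboundedfarthestcells}, and by the choice of $\Gamma$ each cell meets $\Gamma$ in one connected component. By Lemma~\ref{lem:cellcount}, each of the $n$ lines owns exactly $n-1$ cells. Hence the initial face count is $n(n-1)=n^2-n$.

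\textbf{Net change per event type.} By Theorem~\ref{thm:complete_event_list}, only the five listed event types occur, and by Proposition~\ref{prop:collapse} faces are never created spontaneously and never merge.
\begin{itemize}
\item A deletion event removes one face, so $F$ decreases by $1$.
\item A swap event leaves $F$ unchanged.
\item A local minimum event removes one face (the two-gon), so $F$ decreases by $1$.
\item A local maximum event splits one face into two, so $F$ increases by $1$.
\item A terminal event, if it occurs, removes the last two faces at once.
\end{itemize}

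\textbf{Final count.} By Lemma~\ref{lem:shrinking-sphere}, $M_\lambda$ is a nonempty sphere for all $\lambda>\lambda_0$, so the process ends only when all faces vanish at $\lambda_0$. There are two ways this can happen.
\begin{itemize}
\item \emph{Terminal event.} By Lemma~\ref{lem:stop-final} it is the last event. Exactly two faces remain just before it, and both vanish together. So
\[(n^2-n)-n_d-n_m+n_M=2,\]
which gives $n_d+n_m-n_M=n^2-n-2$. This matches $\mathds{1}_d=1$.
\item \emph{Final deletion event.} The final event is a deletion, and it is counted in $n_d$. Arguing as in the proof of Theorem~\ref{thm:complete_event_list}, the last surviving configuration at a vertex consists of four faces meeting at a point of priority $\lambda_0$, and all four vanish there. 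The event removes four faces rather than one. The count is then
\[(n^2-n)-(n_d-1)-n_m+n_M-4=0,\]
which gives $n_d+n_m-n_M=n^2-n-3$. This disagrees with the claimed value $n^2-n$ for $\mathds{1}_d=0$.
\end{itemize}

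\textbf{Main obstacle: the final event when the process ends at a vertex.} The arithmetic above suggests that I am misreading either the paper's convention for $\mathds{1}_d$ or how the final deletion event is counted. I would therefore first pin down precisely how the last global-minimum event is counted. The cleanest route is to treat the final event uniformly as "the last $k$ faces vanish", with $k=2$ for the terminal event and $k=4$ at a vertex. I would then read off the correction term from the convention fixed in the statement, checking in particular whether the final vertex is counted in $n_d$. The identity itself follows directly from the telescoping face count above. The delicate part is the bookkeeping of this terminal configuration, so I would verify it on the four-line case of~\cite{papadopoulou2026} before finalizing the constant.
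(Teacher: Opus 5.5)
Your face count is sound, and the value $n^2-n-3$ you get for a final deletion is correct; you are not misreading anything. The paper counts vertices instead of faces. $\Gamma(\fvd(L))$ has $2n^2-2n-4$ vertices. Deletion and local minimum events change this by $-2$, swaps by $0$, and local maxima by $+2$. A final deletion removes $4$ vertices rather than $2$. Setting the final count to $0$ gives $n_d+n_m-n_M=n^2-n-2$, and one less when the last event is a deletion, which is exactly your $n^2-n-3$. So the printed right-hand side contradicts its own proof in the deletion case. It should read $n^2-n-2-\mathds{1}$, with the indicator equal to $1$ exactly when the process ends with a deletion. (For $n=4$, the printed value $12$ would leave $12-(n_d-1)-n_m+n_M=1$ face before a final deletion, but that event needs four.) Do not hunt for another convention; state and prove the corrected identity.

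Your route does have two genuine gaps that the vertex count avoids.

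\textbf{A missing ending.} The process can end in a third way you omit. The minimizer $x_0$ of the convex function $\pi$ may have exactly three active lines. Then $0\in\operatorname{conv}(g_1,g_2,g_3)$, and $x_0$ is a local minimum of $\rho$ on a trisector. Just above $\lambda_0$ the shrinking map is a theta graph (three two-gons, two vertices), and all three faces vanish together. Counted once in $n_m$, this event removes three faces. So you need the separate equation $(n^2-n)-n_d-(n_m-1)+n_M=3$, which again yields $n^2-n-2$. In the vertex count this case needs no special treatment: it takes $2$ vertices to $0$ like any local minimum. The terminal event likewise needs none ($0$ vertices before and after).

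\textbf{The $+1$ for a local maximum.} This is a global claim. Locally the event only pinches a face, and you must show the two sides are distinct faces. Proposition~\ref{prop:collapse}(1)--(2), which you cite, does not give this. You also cannot quote Corollary~\ref{cor:localmaxeventsplit}, because the paper derives it from this very proposition. An independent argument is available. The faces of $\Gamma(\fvd(L))$ are disks, since its graph is connected, as in the proof of Corollary~\ref{cor:hole}. By Proposition~\ref{prop:collapse}(3) no holes appear later, and pinching a disk separates it. The vertex count needs none of this, because every vertex change can be read directly off the local pictures in Figure~\ref{fig:events}.
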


\begin{proof}
We established that the collapse process for the farthest Voronoi diagram of lines starts from the $\Gamma$-map of the diagram, which has $2n^2-2n-4$ vertices by~\cite{papadopoulou2026}. By Figure~\ref{fig:events}, a deletion (resp.\ swap, local minimum, local maximum) event changes the number of vertices of the shrinking map by $-2$ (resp.\ $0$, $-2$, $+2$).
The only exception is when the last event of the collapse process is a deletion event. In this case, before the event there are 4 triangular faces and 4 vertices on the shrinking map, and after the event the map is empty. Hence, this special deletion event removes four vertices instead of two. 
The result follows by tracking the number of vertices during the collapse process.
\end{proof}

As a remark, by Proposition~\ref{prop:collapse}, a deletion event cannot be reversed during the collapse process. This is not true for swap events. In fact, consecutive swap events that are reverses of each other can happen (the first swaps a shrinking-map edge $(\ell_1, \ell_2)$ to $(\ell_3, \ell_4)$ and the second swaps back). In this case, the two Voronoi vertices and their incident Voronoi edges must form a special structure called a full twist, which is introduced in~\cite{papadopoulou2026}. 

By Theorem~9 of~\cite{papadopoulou2026}, the complexity of the farthest Voronoi diagram $|\fvd(L)|$ is bounded by $\Theta(n^2)$ plus its number of vertices. Together with Proposition~\ref{prop:combinatorialrelation}, we get the following.

\begin{corollary}\label{cor:fvd-complexity-events}
The farthest Voronoi diagram of $n$ lines $L$ has complexity
$\Theta(n^2 + n_s + n_M)$.
\end{corollary}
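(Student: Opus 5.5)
The plan is to combine Theorem~9 of~\cite{papadopoulou2026}, which says that $|\fvd(L)|$ is $\Theta(n^2)$ plus the number of Voronoi vertices, with Proposition~\ref{prop:combinatorialrelation}. This reduces the claim to showing that the number of Voronoi vertices is $\Theta(n^2 + n_s + n_M)$, up to an additive $O(n^2)$.

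\textbf{Step 1: vertices are exactly deletion and swap events.} By the discussion preceding Proposition~\ref{prop:vertex-product-test} and Theorem~\ref{thm:complete_event_list}, every Voronoi vertex $x$ of $\fvd(L)$ is encountered by the collapse process at priority $\pi(x)$. There it produces either a deletion event or a swap event. Conversely, each such event happens at a Voronoi vertex. By general position, distinct vertices give distinct events. Hence the number of Voronoi vertices equals $n_d + n_s$.

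\textbf{Step 2: eliminate $n_d$.} By Proposition~\ref{prop:combinatorialrelation},
\[ n_d = n^2 - n - 2\cdot\mathds{1}_d - n_m + n_M. \]
This gives $n_d \le n^2 + n_M$, so the number of vertices satisfies
\[ n_d + n_s \le n^2 + n_s + n_M. \]
Together with Theorem~9 of~\cite{papadopoulou2026}, this yields the upper bound $|\fvd(L)| = O(n^2 + n_s + n_M)$.

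\textbf{Step 3: the lower bound.} Again by Theorem~9, $|\fvd(L)| = \Omega(n^2)$, and the vertices alone contribute $n_s$. It remains to bound $n_M$ from below by the diagram's complexity.

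\begin{itemize}
\item Each local maximum event occurs at a local maximum of $\rho$ along an edge of $\fvd(L)$, that is, at a point on a Voronoi edge.
\item By Theorem~\ref{thm:trisector-extrema-count}, every edge lies on a trisector branch carrying at most $4$ local maxima, so each edge contains at most $4$ of them.
\item Therefore $n_M \le 4 \cdot (\text{number of edges}) = O(|\fvd(L)|)$.
\end{itemize}

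Combining these, $n^2 + n_s + n_M = O(|\fvd(L)|)$, which gives the matching lower bound.

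\textbf{Main obstacle.} The delicate step is Step~1: verifying that every vertex is counted exactly once as an event, and that no vertex fails to cause a change of the shrinking map. This follows from Theorem~\ref{thm:complete_event_list} and the FVD-side test, since a vertex always has at least one incident arc on each side in $\fvd(L_4)$. The rest is bookkeeping with the two cited results.
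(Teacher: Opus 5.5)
Your proposal is correct and follows the paper's own route: Theorem~9 of~\cite{papadopoulou2026} reduces $|\fvd(L)|$ to $\Theta(n^2)$ plus the vertex count $n_d+n_s$, and Proposition~\ref{prop:combinatorialrelation} gives $n_d\le n^2+n_M$. Your explicit lower-bound step (each Voronoi edge lies on a single trisector, so by Theorem~\ref{thm:trisector-extrema-count} it carries at most $4$ local maxima, hence $n_M=O(|\fvd(L)|)$) spells out what the paper leaves implicit. One small correction: your side remark that every vertex has incident arcs on both sides fails at the global minimum of $\pi$ when the process ends with a deletion event, since there all four arcs lie above that vertex; this changes the count by at most one.
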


Next, we show some structural consequences of the collapse process. A local maximum event splits a face of the shrinking map locally into two, by Figure~\ref{fig:events}. The following can be proved by Proposition~\ref{prop:combinatorialrelation} and tracking face changes during the collapse process.
\begin{corollary}\label{cor:localmaxeventsplit}
	A local maximum event splits a shrinking-map face into two distinct faces.  
\end{corollary}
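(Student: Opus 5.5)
The plan is to argue by contradiction. Suppose a local maximum event at priority $\lambda^*$, on the trisector $T(\ell_i,\ell_j,\ell_k)$, acts on a face $F$ labelled by $\ell_i$ without producing two distinct faces. Locally, the two boundary edges of $F$ meet and the face is pinched, as in Figure~\ref{fig:events}. Just below $\lambda^*$, the face $F$ (a connected component of $A_{\ell_i}(\lambda)$) then has two local pieces. These pieces either lie in two different connected components, which is a genuine split, or they remain connected through a path in $F$ away from the event point.

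In the second case the count of faces labelled $\ell_i$ does not change. Instead, $F$ acquires a new hole: the pinch closes off a cycle, so $F$ becomes an annulus, or its genus-type connectivity changes. Topologically, cutting a disk-like region along a local arc that joins two boundary points either disconnects it or, if the two boundary points lie on different boundary components, merges those components. To make this rigorous I will work on the sphere $M_\lambda$ (Lemma~\ref{lem:shrinking-sphere}). The two edges that meet bound, on the side opposite to $F$, the faces of $\ell_j$ and $\ell_k$. Just after the event these two faces become adjacent along a new edge. If $F$ stayed connected, then by planarity (the Jordan curve theorem on $\mathbb{S}^2$) a closed curve inside $F$ would separate the $\ell_j$ face from the $\ell_k$ face before the event. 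Those faces are adjacent after the event, however, so the local connection must run through the pinch region, and this forces the earlier configuration to have been one where $F$ surrounded a hole. I expect this step to be the main obstacle.

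I will instead try to settle it via Proposition~\ref{prop:collapse} and counting, as the paper suggests. Track the Euler characteristic of the shrinking map, $V-E+\text{Faces}=2$ on the sphere. By Figure~\ref{fig:events}, a local maximum event adds two vertices and, from the edge incidences, three edges (two edges split and one new edge is created), so it adds one edge net beyond the vertices: $\Delta V=+2$, $\Delta E=+3$. Since $\chi=2$ must be preserved, $\Delta\text{Faces}=+1$. So the number of faces of $M_\lambda$ increases by exactly one. The only face touched is $F$, and every other face is unchanged locally, so $F$ must have become two faces.

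Finally, I will double-check the consistency of this argument with Proposition~\ref{prop:combinatorialrelation}, whose vertex bookkeeping already assumes the $+2$ change for local maxima. I will also note that the no-new-holes part of Proposition~\ref{prop:collapse} independently excludes the alternative in which $F$ stays connected but gains a hole. Both routes give the result, and I would present the Euler-characteristic count as the main argument.
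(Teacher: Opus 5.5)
Your Euler-characteristic step is circular as written, and this is a genuine gap. The quantity $V-E+F$, with $F$ the number of faces of $M_\lambda$, equals $\chi(\mathbb{S}^2)=2$ only when every face is an open disk, i.e.\ when the 1-skeleton of $M_\lambda$ is connected. In general it equals $1+C$, where $C$ is the number of connected components of the 1-skeleton.

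A non-splitting pinch happens exactly when the two edges of $F$ that meet (the ones bounding the $\ell_j$- and $\ell_k$-faces) lie on different boundary components of $F$. In that case $C$ drops by one and $V-E+F$ drops from at least $3$ to at least $2$. The counts $\Delta V=+2$, $\Delta E=+3$, and no change in the number of faces are then perfectly consistent with the correct formula. So ``$\chi=2$ must be preserved'' assumes exactly what you need to prove.

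Your fallback does not repair this. As your own cutting remark shows, a pinch can never \emph{create} a hole, so your claim that $F$ ``becomes an annulus'' is backwards. The bad case is that $F$ \emph{already} had a hole before the event. Item~(3) of Proposition~\ref{prop:collapse} alone says nothing about holes that might have been present from the start.

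The missing ingredient is an anchor at the beginning of the collapse: the 1-skeleton of $\Gamma(\fvd(L))$ is connected, as recorded in the proof of Corollary~\ref{cor:hole}. Equivalently, the initial $2n^2-2n-4$ vertices and $n^2-n$ faces (Lemma~\ref{lem:cellcount} plus the no-tunnel property) satisfy Euler's relation. With this anchor, your bookkeeping closes by induction:
\begin{itemize}
\item deletion, swap, and local-minimum events leave $V-E+F$ unchanged;
\item a splitting local maximum also leaves it unchanged;
\item a non-splitting local maximum would lower it from $2$ to $1$, below the floor $1+C\ge 2$.
\end{itemize}
Hence the 1-skeleton is connected just before every local-maximum event. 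Then $F$ is a disk, and cutting a disk along an arc between two boundary points disconnects it.

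This is the same accounting the paper intends, done globally. It compares the vertex balance of Proposition~\ref{prop:combinatorialrelation} with the face balance, which starts at $n^2-n$ and ends at $0$. The two are consistent only if every local-maximum event adds exactly one face.
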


Finally, we characterize the topology of a 3D cell in $\fvd(L)$.

\begin{corollary}\label{cor:hole}
In the farthest Voronoi diagram of lines, every 3D cell $C$ has no holes. Consequently, for every sufficiently large ball \(B\), the truncation \(C\cap B\) is a topological 3-ball.
\end{corollary}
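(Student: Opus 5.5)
We prove the statement (Corollary~\ref{cor:hole}) by running the collapse process backwards in time and tracking the face of $M_\lambda$ that is the cross-section of $C$. Fix a 3D cell $C$ of $\freg(\ell,L)$. For each priority $\lambda$, the slice $C_\lambda = C\cap M_\lambda$ is a union of faces of the shrinking map labelled by $\ell$. Along the ray map $f$ from the proof of Proposition~\ref{prop:collapse}, every point of $C$ moves monotonically to higher priority and stays in $C$. The first step is therefore to show that, for each $\lambda$, the slice $C_\lambda$ is a single face of $M_\lambda$ and that this face is a topological disk.

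\emph{Single face.} Suppose $C_\lambda$ had two components for some $\lambda$. Since $C$ is connected, a path in $C$ joins them. Push this path by the rays to priority $\max(\lambda,\max\pi)$ along the path; this is the argument of the ``no face merge'' item. The result is a path inside one slice at a higher priority $\lambda'$. Following the faces downward from $\lambda'$ to $\lambda$, the two components would then have to arise from one face by a split, and the only split event is a local maximum event (Theorem~\ref{thm:complete_event_list}). The key point is that Corollary~\ref{cor:localmaxeventsplit} produces two \emph{distinct} faces which, by Proposition~\ref{prop:collapse}, never merge again. This means the two pieces are separated in $\freg(\ell,L)$ below the event. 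Above the event they are joined, but this joining happens in $C$. So I would not claim that slices are single faces. Instead I would argue directly that $C$ deformation-retracts along the ray flow.

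\emph{Holes.} This is the main step. The ray map $f$ gives a homeomorphism from each face $D$ at priority $\lambda_1$ onto $f(D)$ at $\lambda_2$, and by the ``no new holes'' item every hole of $D$ already exists in $f(D)$. At very large $\lambda$, $M_\lambda$ is the $\Gamma$-map, and the face corresponding to $C$ is a face of $\Gamma(\fvd(L))$. Because $B$ meets each cell in one component, this face is connected and, being a face of a cell complex on a sphere whose complement contains other regions, is a disk. I would justify the disk property through Lemma~\ref{lem:cellcount} and the $n-1$ cell count, or through the Gaussian-map structure of~\cite{Barequet2023}, where each face is a spherical region without holes. By induction downward in $\lambda$, every face labelled by $\ell$ is then a disk, since each face embeds into a disk face at higher priority with no additional holes. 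Each event modifies disks only by removing or splitting them, and a face splitting into two disks leaves disks.

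\emph{Assembling the 3-ball.} With the slices in hand, write $C\cap B$ as the union over $\lambda$ of the slices, truncated by $B$. The ray flow gives a product structure $C\cap B\cong$ (top disk face on $\partial B$) $\times$ an interval, collapsed where faces vanish, which yields a ball. More cleanly, I would define a retraction of $C\cap B$ onto $C\cap\partial B$ by moving points along the rays and reparametrizing, so that $C\cap B$ is a cone over a disk and hence a 3-ball. The main obstacle I expect is making this rigorous near the events: at a local maximum the slices split, and at a local minimum or deletion they vanish, so the product structure is not uniform. I would handle this by noting that $C\cap B$ is star-shaped with respect to the ray flow, in the sense that every ray leaves through $C\cap\partial B$ exactly once. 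This gives a homeomorphism onto the region between $C\cap\partial B$ and the lower boundary of $C$ along rays, a ``mapping cylinder'' whose fibres are segments. A mapping cylinder over a disk with segment fibres is a 3-ball.
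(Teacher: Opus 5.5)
\textbf{The gap: you never show that the starting face is a disk.} Your whole argument rests on the face $D=C\cap\partial B$ of $\Gamma(\fvd(L))$ being a single topological disk. The downward induction on slices starts from it, and your final mapping-cylinder step uses it as the base. This is exactly the fact the paper's proof has to supply, and your justifications do not supply it:
\begin{itemize}
\item The principle you state, that a face of a cell complex on a sphere whose complement contains other regions is a disk, is false. An annular face separating two other faces is a counterexample.
\item Lemma~\ref{lem:cellcount} only counts the cells of $\freg(\ell,L)$. It says nothing about the topology of any one face.
\item The Gaussian-map route is incomplete. $\Gamma(\fvd(L))$ is obtained from $\gm(\fvd(L))$ by merging two faces of the same line at every vertex of anomaly, and you would still have to show that these merges never close a face up into an annulus.
\end{itemize}
The idea you are missing is the one the paper uses. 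The vertices and edges of $\Gamma(\fvd(L))$ form a connected graph (by induction from the base case of~\cite{papadopoulou2026}), so every face of this spherical map is simply connected. The no-tunnel property of~\cite{Barequet2023} then gives exactly one face per cell. Your connectivity claim also conflates two sets. The definition of the $\Gamma$-map makes $C\cap B$ connected, not $C\cap\partial B$. The transfer to $C\cap\partial B$ does follow from your own ray product structure, but disk-ness does not.

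\textbf{How your route compares once that input is supplied.} Your last paragraph then proves the corollary by itself. The slice induction, which your assembly never uses, and items (2) and (3) of Proposition~\ref{prop:collapse} become unnecessary. The paper instead excludes holes by tracing faces through the collapse, and uses Lemma~\ref{lem:raysinfreg} only for the final sentence. To make $C\cap B\cong D\times(0,1]$ rigorous, you still need two facts you did not check:
\begin{itemize}
\item For $B$ large, every point of $C\cap\partial B$ is a transversal exit point of its ray. A ray orthogonal to $\ell$ can enter $B$ or be tangent to it only at points within bounded distance of $\ell$. Such points of $\partial B$ are far from every other line, hence not in $\freg(\ell,L)$.
\item The lower endpoint of each ray's intersection with $C$ depends continuously on the ray. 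A jump would place a segment of a ray orthogonal to $\ell$ on some bisector $B(\ell,\ell_j)$. There $d(\cdot,\ell_j)$ would grow at unit rate, which forces the segment onto the common perpendicular of $\ell$ and $\ell_j$. On that line the two distances agree at only one point.
\end{itemize}
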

\begin{proof}
	Consider $\Gamma(\fvd(L))$. Its vertices and edges form a connected graph, which follows by induction starting from the base case shown in~\cite{papadopoulou2026}. Hence, there are no holes in the faces of $\Gamma(\fvd(L))$. By the ``no tunnel'' property shown in~\cite{Barequet2023}, every 3D cell $C$ of $\fvd(L)$ induces exactly one face $f$ on $\Gamma(\fvd(L))$. We trace the face of $C$ on the shrinking map during the collapse process. Assume that $C$ has a hole. Then two cases could happen during the collapse process: either a hole appears in some shrinking-map face, or two faces of the same cell merge. These two cases contradict items (3) and (2) of Proposition~\ref{prop:collapse}, respectively. Hence, there are no holes in $C$. The rest follows from Lemma~\ref{lem:raysinfreg}.
\end{proof}

\subsection{Algorithmic Outlook}

We outline an algorithm for constructing $\fvd(L)$ using the collapse
process. Start from $\Gamma(\fvd(L))$, which can be constructed in $O(n^2)$ time
as indicated in Section~\ref{sec:gamma}. 
%
During the whole process, we maintain a cell complex that is topologically
equivalent to the shrinking map, and a priority queue of event candidates.
Events are processed in decreasing priority order.

Each deletion, swap, and local minimum event is associated with the deletion of
at least one edge on the shrinking map, which  is used to identify the event and
compute its priority. Thus, these events are easy to identify and overall handle.

Every time 
an event happens, we need to 
update the two data structures. The shrinking map is updated topologically
according to Figure~\ref{fig:events}. A constant number of new potential
deletion, swap, or local minimum events appear or disappear after an event takes
place, and the priority queue is modified accordingly. 

Local maximum events are  not associated with any existing feature of the
shrinking map:
they correspond to pairs of boundary edges
of a common
face whose intersection defines a trisector extremum.
To identify the local maximum events we need to check pairs of boundary edges along 
each shrinking map face.
We can avoid checking all possible such pairs of edges due to  the following property.

\begin{lemma}\label{lem:crossing-candidates}
Consider a shrinking-map face $f$ of line $\ell_0$ whose neighbors are $\ell_1, \ldots, \ell_m$ in this order. Let $1\leq i< j< k< l\leq m$. Consider two pairs of edges $(\ell_0, \ell_i)$, $(\ell_0, \ell_k)$ and $(\ell_0, \ell_j)$, $(\ell_0, \ell_l)$.
If both pairs of edges meet at a local maximum of trisector $T(\ell_0, \ell_i,
\ell_k)$ and $T(\ell_0, \ell_j, \ell_l)$, respectively, then the
local maximum event with lower priority cannot occur during the collapse process. 
\end{lemma}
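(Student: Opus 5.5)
We argue by contradiction, using the cyclic order of edges around the face $f$ and the facts that faces cannot merge and cannot acquire holes (Proposition~\ref{prop:collapse}). Let $\lambda_1$ be the priority of the local maximum of $T(\ell_0,\ell_i,\ell_k)$ and $\lambda_2<\lambda_1$ the priority of the local maximum of $T(\ell_0,\ell_j,\ell_l)$. Assume both events actually occur. We may assume no other event changes the adjacencies of the face of $\ell_0$ involving these four edges between $\lambda_1$ and $\lambda_2$. If such events do occur, we argue on the descendant of $f$ that contains the edge $(\ell_0,\ell_j)$; by Proposition~\ref{prop:collapse} it is unique, and it still lies in $\freg(\ell_0,L)$.

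\textbf{Step 1: the first event separates $j$ from $l$.} At priority $\lambda_1$, the edges $(\ell_0,\ell_i)$ and $(\ell_0,\ell_k)$ touch at a point of $T(\ell_0,\ell_i,\ell_k)$, and the face $f$ is pinched there. By Corollary~\ref{cor:localmaxeventsplit}, just below $\lambda_1$ this gives two distinct faces $f'$ and $f''$ of $\ell_0$. The boundary of $f$ is a cyclic sequence of edges. The pinch cuts this cycle at the two positions of $\ell_i$ and $\ell_k$. Hence the neighbors with indices strictly between $i$ and $k$, in particular $\ell_j$, end up on the boundary of one part, say $f'$. The neighbors outside that interval, in particular $\ell_l$ (since $l>k$), end up on the boundary of $f''$. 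I will argue this carefully from the local picture in Figure~\ref{fig:events}. Near the local maximum, the four arcs of the edges $(\ell_0,\ell_i)$ and $(\ell_0,\ell_k)$ reconnect so that each new face inherits one contiguous run of the cyclic boundary.

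\textbf{Step 2: the second event is impossible.} The later event at $\lambda_2$ needs the edges $(\ell_0,\ell_j)$ and $(\ell_0,\ell_l)$ to meet as two boundary edges of a common face of $\ell_0$. After $\lambda_1$, these edges lie on the boundaries of $f'$ and $f''$ respectively. By the no-face-merge property, no descendant of $f'$ ever coincides with a descendant of $f''$ at any lower priority. So the edges $(\ell_0,\ell_j)$ and $(\ell_0,\ell_l)$ can never become two edges of the same face whose meeting splits it. A point where they met would be a point of $T(\ell_0,\ell_j,\ell_l)$ where a face is pinched, contrary to the definition of the event. It remains to rule out that one edge migrates from $f''$ to a descendant of $f'$. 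Edges of a given label pair only disappear or split, and they stay attached to the cell of $\ell_0$ they bound. Using the ray map $f$ from the proof of Proposition~\ref{prop:collapse}, any point of edge $(\ell_0,\ell_l)$ at a lower priority maps back into the closure of $f''$.

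\textbf{Main obstacle.} The delicate part is Step 1. I must show rigorously that the split separates exactly the two arcs of the cyclic boundary determined by $i$ and $k$. In particular, $\ell_j$ and $\ell_l$ cannot both remain on the same side. This includes the cases where a neighbor label occurs several times on the boundary, and where intermediate swap or deletion events reorder the neighbors. My first attempt is to track the cyclic boundary word of the face through each event type. Deletion, swap, and local minimum events act on the word only by local substitutions, so they preserve the interleaving of $i,j,k,l$ until one of the edges vanishes. If an edge vanishes, the corresponding event cannot occur at all, and the claim is trivial.
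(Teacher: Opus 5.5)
Your argument assumes that \emph{both} events occur and derives a contradiction, so it only shows that two crossing local maximum events cannot both occur. The lemma asserts more: the event at $\lambda_2$ is impossible even if the event at $\lambda_1$ never happens in $\fvd(L)$. In that case your Step~1 has no split in the full diagram to work with.

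Concretely, write $e_x$ for the edge $(\ell_0,\ell_x)$. Suppose a third local maximum event on $f$, at a priority above $\lambda_1$, pinches edges $e_a,e_b$ with cyclic order $a,i,b,j,k,l$. It puts $e_i$ on one part and $e_j,e_k,e_l$ on the other. The $(i,k)$ event is then preempted, and nothing in your proof excludes the $(j,l)$ event. Likewise, if $e_i$ or $e_k$ vanishes before $\lambda_1$ through a deletion or swap, you call the claim trivial. That holds only for the ``not both'' version; for the stated conclusion it is exactly the nontrivial case. Your list of events acting on the boundary word also omits local maximum events on $f$. Under your framing these are harmless, since any split of the cycle either keeps $i,j,k,l$ together or separates $i$ from $k$ or $j$ from $l$. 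Under the actual claim they are the obstruction.

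The missing idea, which is how the paper argues, is to realize the first split in a subdiagram where it cannot be preempted. In $\fvd(\{\ell_0,\ell_i,\ell_k\})$ every point of $T(\ell_0,\ell_i,\ell_k)$ lies on the diagram. So the pinch of the $\ell_0$-face at priority $\lambda_1$ always takes place and splits that face into two distinct faces. Moreover, $\freg(\ell_0,L)\subseteq\freg(\ell_0,\{\ell_0,\ell_i,\ell_k\})$, and the priority equals $d(\cdot,\ell_0)$ on both. Hence every $\ell_0$-face of the shrinking map of $L$ at a level $\lambda<\lambda_1$ lies inside a single $\ell_0$-face of the three-line shrinking map at the same level. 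The same holds for $L'=\{\ell_0,\ell_i,\ell_j,\ell_k,\ell_l\}$, which the paper uses.

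So once the three-line split places the continuations of $e_j$ and $e_l$ on different faces, they can never bound a common $\ell_0$-face in $\fvd(L)$. This follows from the no-merge property in the three-line diagram and does not depend on anything the remaining lines do. The separation itself is where the interleaving $i<j<k<l$ along $\partial f$ enters, and the paper states it only briefly. Your cyclic-order reasoning is the natural tool for it, but it has to be run against this unconditional split, not against the one in the full diagram.
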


\begin{proof}
Consider the $\fvd(L')$, where $L'=\{\ell_0, \ell_i, \ell_j, \ell_k, \ell_l\}$, and the 3D cell $C\subseteq \freg(\ell_0, L')$ associated with the face $f$. WLOG, assume that the local maximum event on trisector $T(\ell_0, \ell_i, \ell_k)$ has higher priority.
Consider the 3D cell $C'\subseteq \freg(\ell_0, \{\ell_0, \ell_i, \ell_k\})$ that contains~$C$. By definition, such a 3D cell exists because $C$ can be obtained from $C'$ by trimming, after inserting lines $\ell_j, \ell_l$. In this cell $C'$, the local maximum event occurs, after which the shrinking map face is split into two. After inserting the lines $\ell_j, \ell_l$, they must lie on different shrinking-map faces of $\ell_0$ at the lower priority.
By Proposition~\ref{prop:collapse}, shrinking-map faces cannot merge. Hence, the local maximum event associated with $T(\ell_0, \ell_j, \ell_l)$ can never happen.
\end{proof}

In the spirit of Lemma~\ref{lem:crossing-candidates}, local maximum events that ``cross'' each other cannot both occur. Consequently, for each shrinking-map face $f$, we maintain a bucket $B_f$ of size $|f|$ that contains all possible local maximum events. When a deletion, swap, or local minimum event incident to $f$ happens, the number of neighbors of $f$ changes by $1$. In this case, $B_f$ can be updated by considering the interaction between the new neighbor and all others, in $O(|f|)$ time. When a local maximum event splits $f$ into two faces $f_1$ and $f_2$, we split $B_f$ into $B_{f_1}$ and $B_{f_2}$ accordingly, also in $O(|f|)$ time.

This gives a high-level algorithm in which the local maxima are generated on the
fly rather than all at once upfront. A direct implementation following this outline gives the time complexity $O(\left|\fvd(L) \right|k_{\max} \log n)+T_{\text{init}}$,
where $k_{\max}$ is the maximum size of a shrinking-map face. The term $T_{\text{init}}$ is the time for the initial bucket construction from $\Gamma(\fvd(L))$, which is $\sum_{f\in \Gamma(\fvd(L))}|f|^2$ if implemented by brute force.

In future research, we aim  
to obtain an output-sensitive algorithm whose running time
would not depend on $k_{\max}$. For that purpose,
instead of a bucket $B_f$, we could consider local maximum events associated
with a simplified \emph{Voronoi-like} structure of the neighbors of face $f$ in the spirit of~\cite{papadopoulou2023}. 
However, the algorithm of~\cite{papadopoulou2023} cannot be directly applied,
leaving the topic to future research.

\subsection{Generalization to Convex Sites under a Convex Distance Function}

It turns out that most statements can be extended to a set $S$ of $n$ disjoint convex sites in~$\R^3$ and a more universal distance measure.
Any bounded strictly convex set~$\C \subset \R^3$ which contains the origin (also called its center) in its interior can be used to define
a \textit{convex distance} between two points~$p,q \in
\R^3$: $d_\C(p,q) = \inf\limits_{t \geq 0} \Set{t \mid q \in p + t \cdot
	\C}$. 
The distance $d_\C(p,q)$ describes the amount~$t \geq 0$ by which~$\C$, when being placed
at~$p$, has to be scaled to cover $q$; see \cref{fig:scale}. 
The distance $d_\C(p,s)$ from a point~$p \in \mathbb{R}^3$ to a site~$s \in S$ is defined
as~\(d_\C(p,s) = \min\{d_\C(p,q) \mid q \in s\}\).

\begin{figure}[ht]
	\centering
	\includegraphics[width=0.4\linewidth]{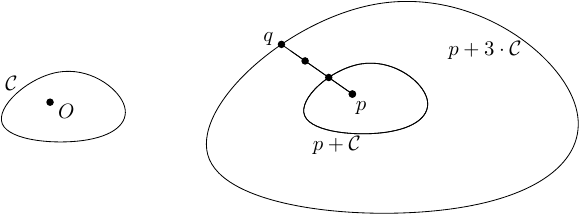}
	\caption{Convex distance induced by the convex set $\C$ containing the origin $O$: $d_\C(p,q) = 3$.}
	\label{fig:scale}
\end{figure}

The Lemmas/Theorems/Propositions/Corollaries \labelcref{lem:raysinfreg,cor:unboundedfarthestcells,thm:complete_event_list,prop:collapse,lem:stop-final,prop:combinatorialrelation,cor:localmaxeventsplit,lem:crossing-candidates}
still hold in this more general setting.

\bibliography{collapse_doi}



\end{document}